\documentclass[a4paper,onecolumn,10pt]{quantumarticle}
\pdfoutput=1

\usepackage[numbers]{natbib}
\usepackage{soul,color}
\usepackage{xcolor}
\usepackage{graphicx}
\usepackage[shortlabels]{enumitem}

\usepackage{algorithm}
\usepackage{algpseudocode}
\algrenewcommand\algorithmicrequire{\textbf{Input:}}
\algrenewcommand\algorithmicensure{\textbf{Output:}}
\newfloat{subroutine}{tbp}{los}
\floatname{subroutine}{Subroutine}

\usepackage{tikz}
\usetikzlibrary{quantikz2}
\usepackage{mathtools}
\usepackage{amsfonts}
\usepackage{amsthm}
\usepackage{amssymb}
\usepackage{amsmath}
\usepackage{dcolumn}
\usepackage{physics}
\usepackage{float}
\usepackage{bm}
\usepackage{verbatim}
\usepackage{braket}
\usepackage{bbm}
\usepackage{cancel}
\newtheorem{theorem}{Theorem}
\newtheorem*{theorem*}{Theorem}

\theoremstyle{remark}

\DeclareMathOperator*{\argmax}{arg\,max}

\newcommand{\supp}{\operatorname{supp}}
\newcommand{\poly}{\operatorname{poly}}
\usepackage[colorlinks,linkcolor=blue,anchorcolor=blue,citecolor=blue,urlcolor=blue]{hyperref}

\begin{document}
\title{Polynomial-time exact diagonalization via sparse guided eigenwalks}
\author{Zachary E. Chin}\email{zchin@mit.edu}\affiliation{
IBM Quantum, IBM T.J. Watson Research Center, Yorktown Heights, NY 10598, USA}\affiliation{Department of Physics, Massachusetts Institute of Technology, Cambridge, Massachusetts 02139, USA}
\author{Mario Motta}\affiliation{
IBM Quantum, IBM T.J. Watson Research Center, Yorktown Heights, NY 10598, USA}
\author{Javier Robledo Moreno}\affiliation{
IBM Quantum, IBM T.J. Watson Research Center, Yorktown Heights, NY 10598, USA}
\author{Antonio Mezzacapo}
\thanks{Current affiliation: NVIDIA Corporation, Santa Clara, CA, USA}
\affiliation{
IBM Quantum, IBM T.J. Watson Research Center, Yorktown Heights, NY 10598, USA}
\author{Isaac L. Chuang}\affiliation{Department of Physics, Massachusetts Institute of Technology, Cambridge, Massachusetts 02139, USA}\affiliation{Center for Ultracold Atoms and Research Laboratory of Electronics, Massachusetts Institute of Technology, Cambridge, Massachusetts 02139, USA} 
\author{William Kirby}\affiliation{
IBM Quantum, IBM T.J. Watson Research Center, Yorktown Heights, NY 10598, USA}

\maketitle
\begin{abstract}
    Computing quantum ground states is generically difficult, but additional structure can sometimes allow diagonalization to be recast as a more feasible problem. For example, when the desired ground state is sparse in a given basis, diagonalization can be facilitated via graph search. We make this reformulation precise by introducing the \emph{eigenwalk problem}, which seeks the support of a sparse eigenvector of a Hermitian matrix by exploring the graph induced by its nonzero entries. However, it is not obvious whether the relevant support vertices must always be efficiently reachable by a search on the graph. To resolve this question, we prove that for every sparse eigenvector, there exists a (possibly different) sparse eigenvector with the same eigenvalue whose support is tightly localized in the graph, with diameter scaling only linearly in the sparsity and independently of the total number of vertices. As a consequence, if a $2^n$-dimensional, $\poly(n)$-sparse Hamiltonian has an $\mathcal{O}(1)$-sparse extremal eigenvector and one support element is known, then an exact eigenvector with the same eigenvalue can be computed classically in $\poly(n)$ time. The same conclusion follows when the $\mathcal{O}(1)$-sparse eigenvector is non-extremal, provided that it is sparser than every eigenvector with a different eigenvalue. These results hold with no assumptions on the degeneracy, locality, spectral width, or spectral gap of the Hamiltonian, and the underlying support-localization principle also extends to problems beyond exact diagonalization, such as sparse principal component analysis.

\end{abstract}

\section{Introduction}
Computing the ground state of quantum systems is a fundamental challenge in chemistry and physics~\cite{friesner2005ab,foulkes2001quantum,schollwock2005density,bartlett2007coupled,leblanc2015solutions,dirac1928quantum,kohn1999nobel,pople1999nobel}. In the context of chemistry, the ground state informs molecular geometry, charge distribution, and chemical stability; likewise, the ground state of condensed-matter systems determines low-temperature phase structure plus magnetic and topological properties. The lowest-energy states of a Hamiltonian can be constructed through exact dense diagonalization, but this procedure generically incurs exponential cost relative to the system size. Consequently, it has long been widely desirable to leverage additional mathematical structures to simplify diagonalization in certain settings.

Interestingly, when the desired ground state is known to be sparse in a given basis, its computation can be reformulated as a graph search problem. If the support of the ground state (i.e., the set of basis states with nonzero amplitudes) can be identified, then the Hamiltonian can be projected to the span of this support prior to diagonalization, often leading to substantial cost savings. One way to systematically search for the support is to harness the implicit graph structure of the underlying eigenvalue equation: note that the nonzero entries of the Hamiltonian matrix define edges that connect pairs of basis states, forming a graph. Ideally, the support vertices can be efficiently found by exploring this graph, thereby simplifying the ground state computation.

Analogous graph search problems also appear elsewhere in the field of computer science. Sparse principal component analysis, for instance, instead considers the graph induced by nonzero entries of a covariance matrix $\Sigma$ and seeks the support vertices of a $k$-sparse vector $v$ that maximizes the variance $v^\intercal \Sigma v$, where $k$ is a given positive integer \cite{spca}. Additionally, planted cliques can be found in a graph by locating vertices with large components in the second-highest eigenvector of the adjacency matrix \cite{planted-clique}. Along related lines, the identification of small subgraphs that locally support sparse eigenvectors of graph matrices (such as adjacency matrices and Laplacians) can simplify the computation of the remaining spectrum \cite{Henson}.

However, it is not evident whether sparse eigenvector supports can always be reached efficiently by graph search. For the exact ground-state problem, even if the desired support vertices are few, it seems plausible that they could be distributed arbitrarily far apart on the graph since the eigenvalue equation does not obviously constrain the support to be tightly localized. A priori, one might expect that reaching the desired vertices could require visiting an exponentially large portion of the graph, even when the support itself is small. 

At the same time, graph search has been demonstrated to be a practically useful subroutine in quantum chemistry. For example, a diverse class of empirical ground-state methods succeeds precisely because vertices which dominate the support tend to concentrate within an efficiently accessible region of the graph. Under standard assumptions, the Kaniel--Paige--Saad bounds imply that an accurate ground-state approximation lies in a Krylov subspace of dimension polynomial in the system size, generated by repeated applications of the Hamiltonian; if the initial vector is supported on a polynomial-diameter subgraph, then the support of this approximation must lie within a polynomial-diameter subgraph as well~\cite{kaniel1966estimates,paige1971computation,saad}. However, even when the diameter of the latter subgraph is polynomial, fully enumerating it could require visiting an exponential number of vertices.
Modern selected configuration interaction (CI)~\cite{booth2009fermion,tubman2016deterministic,holmes2016heat,schriber2016communication} narrows this large search space heuristically: in practice, one often needs only a polynomially-sized subset of the reached vertices to obtain an accurate ground-state approximation~\cite{bender1969studies,huron1973iterative,buenker1974individualized,buenker1978applicability,evangelisti1983convergence,illas1991selected,harrison1991approximating,stampfuss2005improved,ivanic2001identification,bytautas2009priori,epstein1926stark,nesbet1955configuration}.
Some recent varieties of selected CI have even been formulated explicitly in graph-theoretic language~\cite{Sun2023,trimci}. 
Nonetheless, there is no known proof that selected CI methods find ground-state approximations in polynomial time in general; therefore, they do not imply efficient identification of the exact ground-state support either, a strictly stronger task.

In recent years, the graph-search motif has also appeared in quantum algorithms for approximating sparse ground states: these techniques randomly sample support vertices by measuring qubits after the application of a quantum circuit~\cite{sqd,kanno2023quantum,mikkelsen2024quantum,sugisaki2025hamiltonian,skqd,parrish2019quantum,stair2020multireference} that effectively explores the graph. In fact, quantum Krylov methods based on steps of time evolution even offer rigorous guarantees of polynomial runtime for sparse ground-state approximation under certain assumptions~\cite{skqd,epperly2022theory}. In practice, convergence is often more efficient than predicted by linear-algebraic arguments~\cite{kirby2024analysis}, suggesting the intriguing possibility of a deeper interpretation, conceivably based on graph-theoretic ideas.

In this paper, we establish that a graph search for the support vertices of an exact eigenvector can be provably efficient on a classical computer. To do so, we introduce the \textit{eigenwalk problem}: given a simple graph on $2^n$ vertices and a Hermitian matrix whose nonzero entries correspond exactly to the edges of the graph, find the support of any exact eigenvector that belongs to a specified target eigenspace. This target eigenspace is identified by the index of its eigenvalue (e.g., lowest, second lowest, etc.), and the eigenvalue itself is not assumed to be known beforehand. We prove that when the matrix is $\poly(n)$-sparse and the target eigenspace contains an $\mathcal{O}(1)$-sparse eigenvector $\ket{\psi}$, then the desired support lies in a subgraph of \textit{constant} diameter that can be enumerated by visiting $\poly(n)$ vertices. The eigenwalk is therefore solvable in $\poly(n)$ time via a classical algorithm, assuming initial knowledge of any vertex in the support of $\ket{\psi}$ (an additional ``sparsity-separation" promise is assumed if the eigenvector is non-extremal). It ultimately follows that the ground state of a $\poly(n)$-sparse Hamiltonian can be computed exactly in polynomial time, provided the ground state is $\mathcal{O}(1)$-sparse and one has access to a $\poly(n)$-sparse guiding state with nonzero overlap on the ground state. As a caveat, we note that while this runtime is formally polynomial, the algorithm may not always be practical in realistic settings because the polynomial exponent scales with the ground state sparsity. Nonetheless, we highlight two important differences between our setting and that of conventional Krylov methods: firstly, we seek to compute the supports of \textit{exact} eigenvectors, rather than approximations to those eigenvectors; secondly, in contrast to standard convergence assumptions for Krylov methods, our results presuppose no conditions on any spectral properties of the Hamiltonian (such as the spectral width or gap) or even the magnitude of the guiding state overlap, provided it is nonzero. 

As intermediate steps, we also prove a number of interesting facts about the structure of eigenvector supports within graphs. For example, even when the support vertices of a given eigenvector are distributed arbitrarily far apart in the graph, we prove that there must exist some other eigenvector with the same eigenvalue whose support is instead tightly localized. In particular, the graph diameter of the latter's support vertices scales at most linearly with the size of the support and independently of the total size of the graph. We additionally demonstrate the robustness of this result in the case where the underlying eigenvector is only approximately sparse. While our primary case-study in this work is Hamiltonians appearing in quantum mechanics, our results have broader applicability, for example to sparse principal component analysis.

The rest of this paper is structured as follows. In Section~\ref{sec:graph-struct}, we establish these facts about the structure of eigenvector supports within graphs. Then, in Section~\ref{sec:algo}, we use these results to formulate a polynomial-time algorithm for the eigenwalk problem. Lastly, we discuss the significance of this algorithm and its implications for quantum advantage in Section~\ref{sec:discussion}. The Appendix then completes several proofs deferred from the main text, provides illustrative examples to supplement arguments made in the main text, and applies our results to sparse principal component analysis.

\section{Eigenvector supports within graphs}\label{sec:graph-struct}
If an eigenvector is to be supported on a particular subset of vertices, then the graph must satisfy corresponding structural constraints; conversely, the structure of that support within the graph may impose linear-algebraic constraints on the underlying eigenvector, such as degeneracy. After formalizing the eigenwalk problem in Section~\ref{sec:eigenwalk-statement}, we make these claims concrete in Sections~\ref{sec:graph-eigenvalue} and \ref{sec:irreducible}. 

\subsection{Eigenwalk problem statement}\label{sec:eigenwalk-statement}
The eigenwalk problem is defined as follows. Let $G$ be a simple graph with vertices $V$ and edges $E$, and consider the vector space spanned by an orthonormal basis of vectors $\ket{j}$ indexed by the vertices $j\in V$. A Hermitian operator $H$ on this vector space is said to be \textit{$G$-consistent} if $H_{ij} \neq 0$ for all edges $\{i,j\}\in E$ and $H_{ij} = 0$ for all pairs $i\neq j$ which are not edges. Given a $G$-consistent $H$ with $m$ unknown distinct eigenvalues $\lambda_1<\ldots<\lambda_m$ and a target eigenvalue index $k\in\{1,\ldots, m\}$, the eigenwalk problem asks for the vertex support of any eigenvector $\ket{\psi}$ of $H$ with eigenvalue $\lambda_k$ (see Figure~\ref{fig:eigenwalk-problem}). The vertex support of $\ket{\psi}=\sum_{j\in V}\psi_j \ket{j}$, denoted $\supp(\ket{\psi})$, is defined as the subset of vertices $j\in V$ with nonzero complex coefficients $\psi_j$. Note that the eigenvector $\ket{\psi}$ is said to be \textit{extremal} if its eigenvalue is $\lambda_1$ or $\lambda_m$, and \textit{non-extremal} otherwise. If $\ket{\psi}$ is nondegenerate, then the eigenwalk problem seeks a unique set of support vertices. On the other hand, if $\ket{\psi}$ is degenerate, the problem accepts multiple valid solutions, corresponding to the supports of all possible degenerate eigenvectors. The eigenwalk problem is furthermore said to be \textit{guided} if access is available to at least one ``guiding" vertex promised to belong to the (otherwise unknown) support of some satisfactory eigenvector.

\begin{figure}[htbp]
    \centering
    \includegraphics[width = 0.60\linewidth]{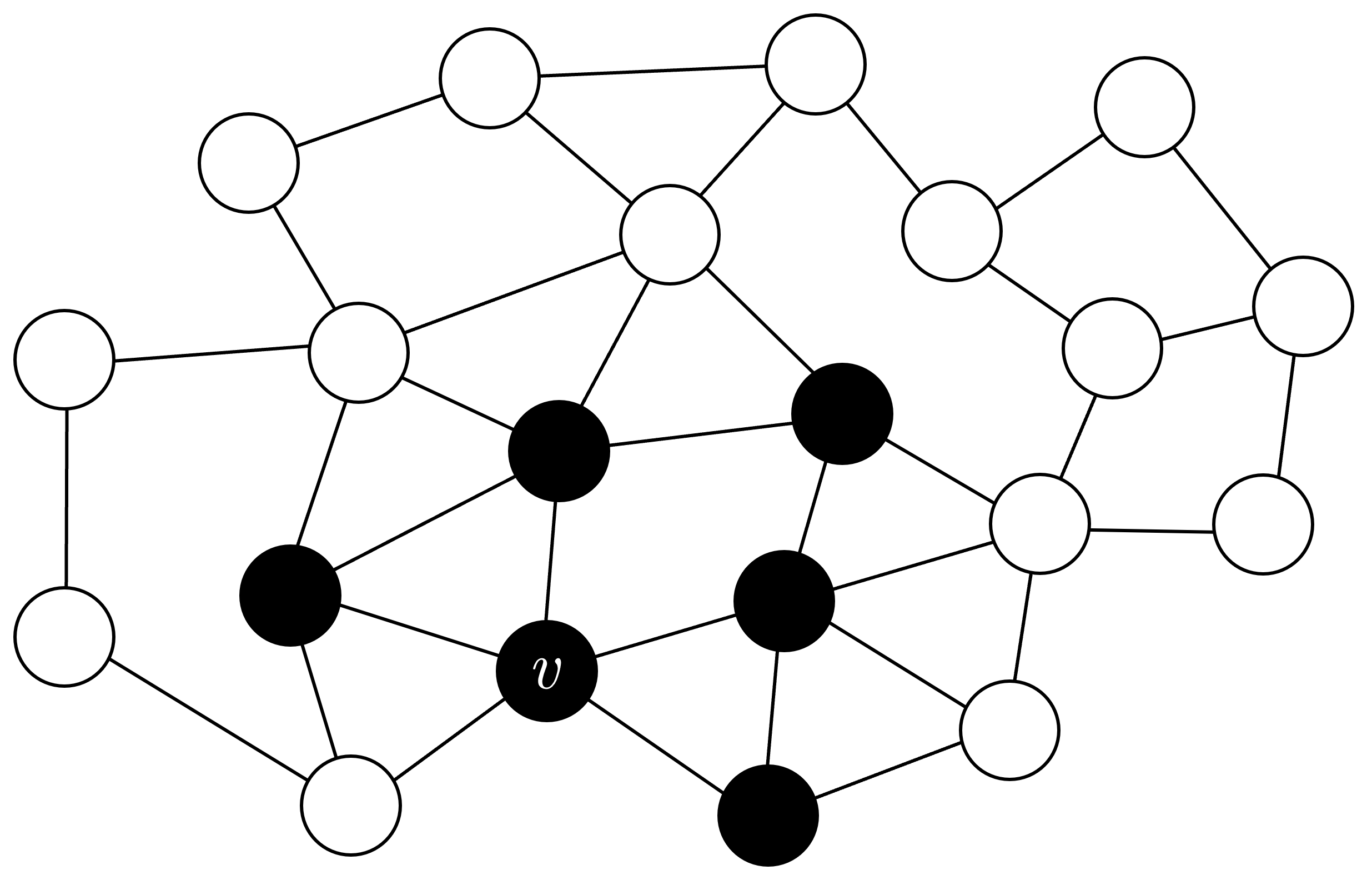}
    \caption{The eigenwalk problem for a Hermitian matrix $H$ consistent with the above graph: given a target eigenvalue index $k$, find the vertex support of any eigenvector of $H$ associated with the $k$th-smallest eigenvalue. The support of one such eigenvector $\ket{\psi}$ is the set of shaded vertices, which constitutes a valid solution. If a guiding vertex $v$ belonging to $\supp(\ket{\psi})$ is known, the remaining support vertices can be reached by traversing the graph outward from $v$.} 
    \label{fig:eigenwalk-problem}
\end{figure}

The guided eigenwalk problem could be solved naively by exactly diagonalizing the full matrix $H$ and reading off the support of the desired eigenvector; however, such an approach would in general incur exponential cost. On the other hand, this problem could be more efficiently solved if some solution set of support vertices were tightly localized in $G$. Supposing these vertices all belonged to a small connected subgraph, they could be reached quickly by searching outwards from the guiding vertex. 

Remarkably, the underlying structure of the eigenvalue equation does indeed force certain solution sets of support vertices to be tightly localized: in particular, one can always find a solution whose graph diameter depends linearly on the number of support vertices and \textit{not} on the total number of vertices $\abs{V}$. To intuitively understand this fact, we will build some simple pedagogical examples, while deferring certain rigorous proofs to Appendix \ref{appendix:proofs}.

\subsection{Graph structure and eigenspace structure} \label{sec:graph-eigenvalue}

First, we explore how the eigenvalue equation imposes constraints on the underlying graph. Given any graph $G$ and subset $S$ of vertices, there does not always exist a $G$-consistent Hermitian matrix $H$ that admits an eigenvector whose support is $S$. For example, simply let $G$ consist of two connected vertices as in Figure \ref{fig:graph-examples}a, and let $S$ contain only vertex 1. Suppose there exists some $G$-consistent Hermitian $H$ with an eigenvector $\ket{\psi} = \ket{1}$. Then $H\ket{\psi} = H_{11}\ket{1} + H_{21}\ket{2}$. However, since $H$ is $G$-consistent, $H_{21}\neq 0$, so $H\ket{\psi}$ cannot be proportional to $\ket{\psi}$, and $\ket{\psi}$ cannot be an eigenvector of $H$. The issue arises on the non-support vertices adjacent to $S$, where the amplitudes of $H\ket{\psi}$ must cancel. Accordingly, after modifying $G$ by adding a third vertex adjacent to the existing two and expanding $S$ to $\{1,3\}$ (see Figure \ref{fig:graph-examples}b), it is possible for some $G$-consistent $H$ to have an eigenvector $\ket{\psi} = c_1\ket{1}+c_3\ket{3}$ provided that the entries of $H$ and coefficients $c_1,c_3$ are chosen such that $\bra{2}H\ket{\psi}=H_{21}c_1+H_{23}c_3 = 0$. For example, it suffices to choose $H$ as the adjacency matrix of $G$, that is, where $H_{ij} = 1$ if $\{i,j\}$ is an edge of $G$ and 0 otherwise:
\begin{align} \label{eq:H-example}
    H = \begin{pmatrix}
        0 &1 &1\\
        1 &0 &1\\
        1 &1 &0
    \end{pmatrix}.
\end{align}
Then ${\ket{\psi} = \ket{1}-\ket{3}}$ is an eigenvector of $H$ with eigenvalue $-1$.
\begin{figure}[htbp]
    \centering
    \includegraphics[width = 0.55\linewidth]{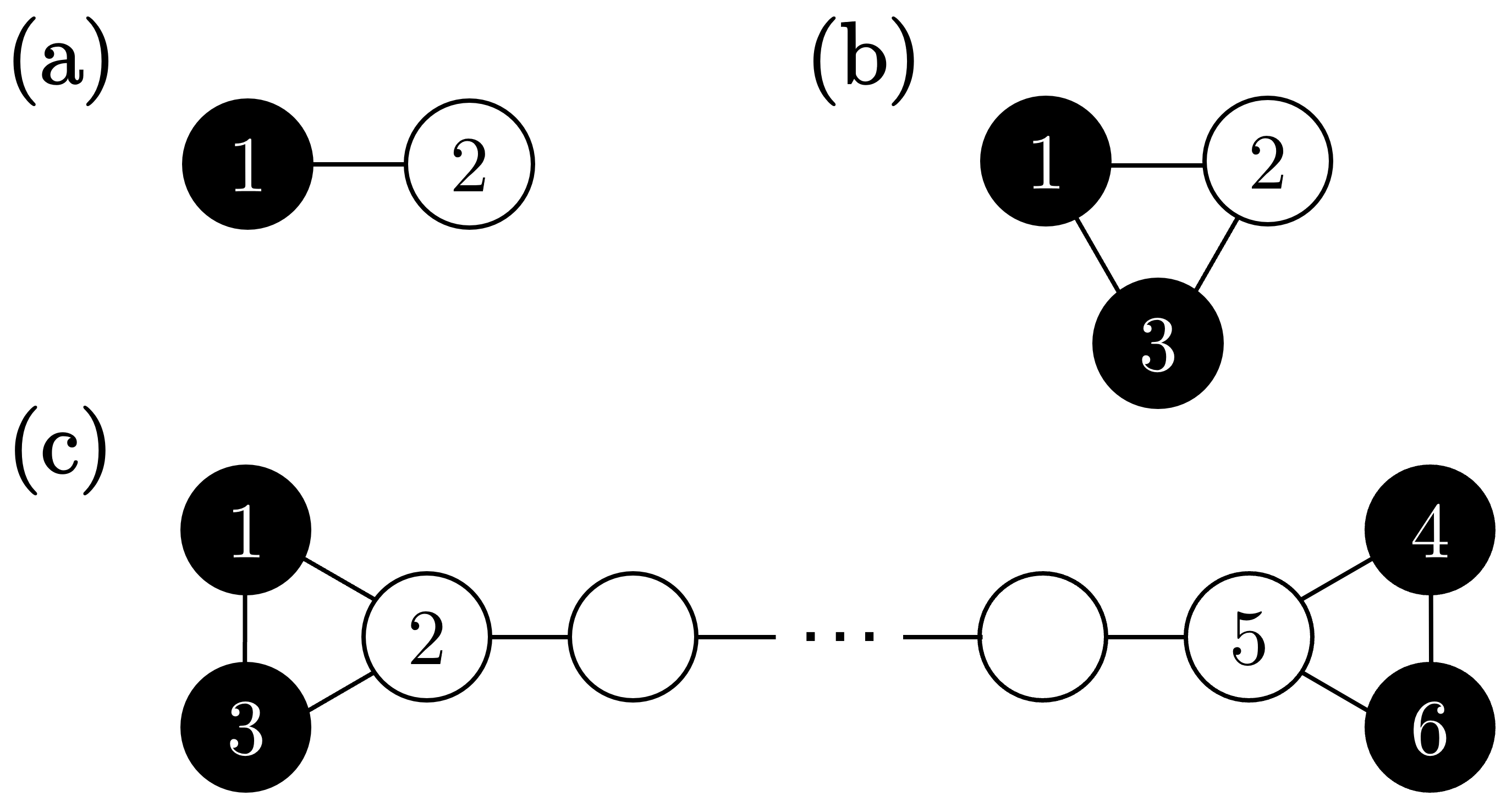}
    \caption{Three possible graphs, each with a subset $S$ of shaded vertices. In \textbf{(a)}, there exists no graph-consistent Hermitian matrix admitting an eigenvector whose support is $S$, while in \textbf{(b)} and \textbf{(c)}, there does exist such a matrix. In \textbf{(c)}, the resulting support $S$ can be partitioned into two vertex pairs which are separated by an arbitrarily large distance.} 
    \label{fig:graph-examples}
\end{figure}

Next, we examine whether there could exist \textit{some} solution to an eigenwalk problem whose vertices are distributed arbitrarily far apart in the underlying graph. Construct a graph $G'$ by connecting two copies of the $G$ from Figure \ref{fig:graph-examples}b through an arbitrarily long path of $\ell$ extra vertices, as depicted in Figure \ref{fig:graph-examples}c. Then define $H'$ again as the adjacency matrix of $G'$. It is straightforward to verify that ${\ket{\psi'} = (\ket{1}-\ket{3})+(\ket{4}-\ket{6})}$ is an eigenvector of $H'$ with eigenvalue $-1$. Moreover, the support vertices $\{1,3\}$ are separated from the remaining support $\{4,6\}$ by a distance of $\ell + 3$, which can be made arbitrarily large, as desired. 

However, when the support vertices are divided into sufficiently separated subsets, the underlying eigenvector must be degenerate. In the present example, the components of $\ket{\psi'}$ supported on each of $\{1,3\}$ and $\{4,6\}$, that is, $(\ket{1}-\ket{3})$ and $(\ket{4}-\ket{6})$, are in fact also eigenvectors of $H'$ with eigenvalue $-1$. In general, degeneracy is guaranteed when the full eigenvector support can be partitioned into subsets with disjoint neighborhoods in the graph (see Figure~\ref{fig:venn}). Interestingly, for extremal eigenvectors, an easier sufficient condition applies: the support need only admit a partition with no edges between distinct subsets, even if their neighborhoods overlap.

\begin{figure}[htbp]
    \centering
    \includegraphics[width = 0.7\linewidth]{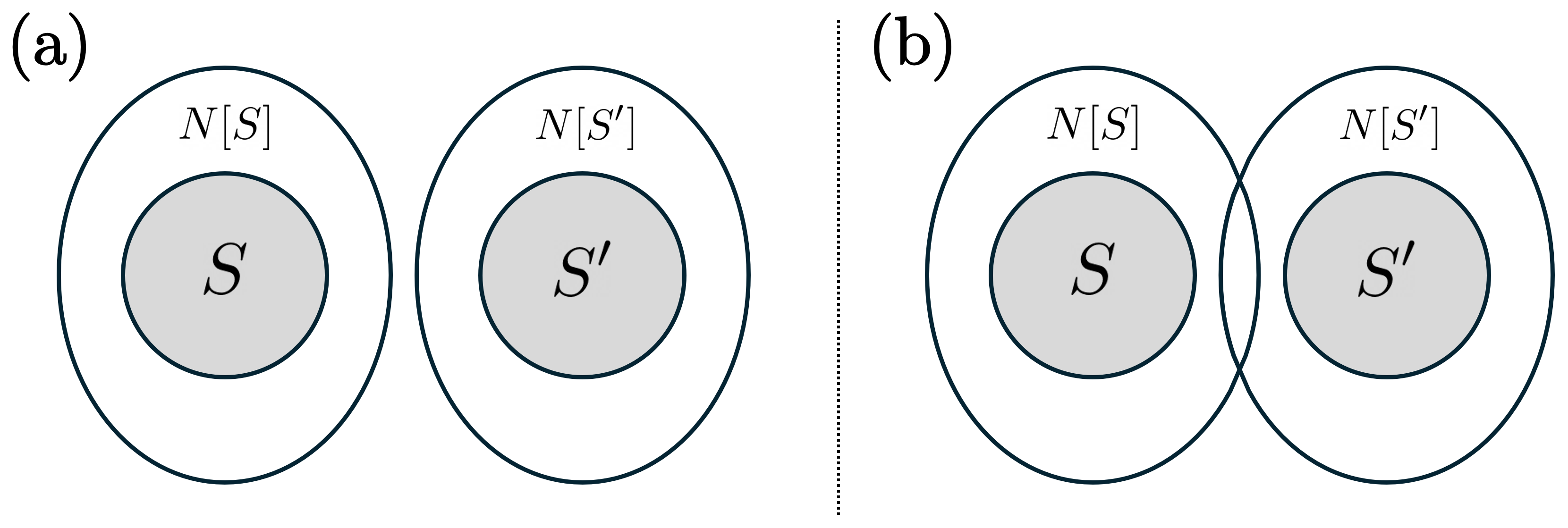}
    \caption{Two sufficient conditions for eigenvector degeneracy when the support is partitioned into subsets $S$ and $S'$. \textbf{(a)} In general, if $S$ and $S'$ have disjoint closed neighborhoods in the graph, the eigenvector must be degenerate. \textbf{(b)} In the extremal case, degeneracy follows if no vertices in $S$ are neighbors of $S'$, and vice versa.}
    \label{fig:venn}
\end{figure}

The following theorem summarizes these results. For any graph $G$ and two subsets $S,S'$ of vertices, define $\operatorname{dist}(S,S')$ to be the minimum distance (number of edges) in $G$ between a vertex in $S$ and one in $S'$. Additionally, let $N[S]$ represent the closed neighborhood of $S$ in $G$, that is, the union of $S$ with any vertices adjacent to one in $S$; let $N(S) = N[S]\setminus S$ be the open neighborhood. Lastly, denote the projector onto the span of $S$ with ${\Pi_S = \sum_{j\in S} \ketbra{j}}$.

\begin{theorem}[Sufficient conditions for degeneracy]\label{thm:degeneracy}
    Let $G$ be a simple graph, and let $H$ be a $G$-consistent Hermitian matrix. Suppose that ${H\ket{\psi} =\lambda\ket{\psi}}$ for some eigenvector $\ket{\psi}$ and eigenvalue $\lambda$, and that $\supp(\ket{\psi})$ can be partitioned into two nonempty subsets $S$ and $S'$. If
    \begin{align}\label{eq:degeneracy-distance}
        \operatorname{dist}(S,S') > 
        \begin{cases}
        2, & \textrm{in general},\\
        1, & \textrm{when } \ket{\psi} \textrm{ is extremal,}
        \end{cases}
    \end{align}
    then there exist nonzero eigenvectors $\ket{\psi_S}$ and $\ket{\psi_{S'}}$ of $H$, both with eigenvalue $\lambda$, such that
    \begin{align}
        \supp(\ket{\psi_S})= S,
        \qquad
        \supp(\ket{\psi_{S'}})= S'.
    \end{align}
    Consequently, $\lambda$ is degenerate.
\end{theorem}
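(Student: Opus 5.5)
The natural candidates are the two restrictions of $\ket{\psi}$ to the parts of its support:
\begin{align*}
\ket{\psi_S}=\Pi_S\ket{\psi},\qquad \ket{\psi_{S'}}=\Pi_{S'}\ket{\psi}.
\end{align*}
By construction $\supp(\ket{\psi_S})=S$ and $\supp(\ket{\psi_{S'}})=S'$, and $\ket{\psi}=\ket{\psi_S}+\ket{\psi_{S'}}$. It remains to show that each restriction is an eigenvector with eigenvalue $\lambda$. Once that holds, the two vectors have disjoint nonempty supports, so they are linearly independent, and $\lambda$ is degenerate.

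\textbf{General case, $\operatorname{dist}(S,S')>2$.} Since $H$ is $G$-consistent, $H\ket{\psi_S}$ is supported on $N[S]$ and $H\ket{\psi_{S'}}$ is supported on $N[S']$. Two vertices at distance at most $2$ cannot lie one in $N[S]$ and one in $N[S']$ while still satisfying the distance hypothesis, so in particular $N[S]\cap N[S']=\emptyset$. Now apply $\Pi_{N[S]}$ to the eigenvalue equation $H\ket{\psi_S}+H\ket{\psi_{S'}}=\lambda\ket{\psi_S}+\lambda\ket{\psi_{S'}}$. The terms $H\ket{\psi_{S'}}$ and $\lambda\ket{\psi_{S'}}$ are both supported in $N[S']$, so they are annihilated. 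The terms $H\ket{\psi_S}$ and $\lambda\ket{\psi_S}$ are unchanged. This gives $H\ket{\psi_S}=\lambda\ket{\psi_S}$, and the same argument applies to $S'$.

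\textbf{Extremal case, $\operatorname{dist}(S,S')>1$.} Assume $\lambda=\lambda_1$; the case $\lambda_m$ follows by replacing $H$ with $-H$. Now the neighborhoods may overlap, so the equation no longer splits, and we use a variational argument instead.

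Because there are no edges between $S$ and $S'$, $\bra{\psi_S}H\ket{\psi_{S'}}=0$. Since $H$ is Hermitian, the cross terms vanish in both orders, so
\begin{align*}
\lambda_1\left(\|\psi_S\|^2+\|\psi_{S'}\|^2\right)=\bra{\psi}H\ket{\psi}=\bra{\psi_S}H\ket{\psi_S}+\bra{\psi_{S'}}H\ket{\psi_{S'}}.
\end{align*}
Each Rayleigh quotient is at least $\lambda_1$. Since the weighted sum attains the minimum, both must equal $\lambda_1$. A vector whose Rayleigh quotient equals the minimum eigenvalue lies in the $\lambda_1$-eigenspace: expand it in an eigenbasis of $H$, and any weight on a higher eigenvalue would raise the quotient above $\lambda_1$. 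Hence both restrictions are eigenvectors with eigenvalue $\lambda_1$.

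There is no real obstacle here. The only care needed is in the extremal case: the cross terms vanish because $S$ and $S'$ are non-adjacent, and one must not claim that the two neighborhoods are disjoint, since they need not be.
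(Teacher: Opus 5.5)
Your proof is correct. The general case matches the paper's argument. The paper splits $N[R]$ into $S$, $N(S)$, $S'$, $N(S')$ and matches components, which amounts to your single projection by $\Pi_{N[S]}$.

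The extremal case differs mildly. The paper builds the reflected state $\ket{\psi_S}-\ket{\psi_{S'}}$ and notes that the vanishing cross term gives it the same energy as $\ket{\psi}$. It invokes Rayleigh--Ritz to make this state a lowest eigenvector too. It then recovers $\ket{\psi_S}$ and $\ket{\psi_{S'}}$ as linear combinations of $\ket{\psi}$ and the reflected state. You instead observe that $\bra{\psi_S}H\ket{\psi_S}-\lambda_1\norm{\ket{\psi_S}}^2$ and its $S'$ counterpart are nonnegative and sum to zero, so both vanish, and you apply Rayleigh--Ritz to each piece directly. Both arguments rest on the same two facts: there are no matrix elements between $S$ and $S'$, and the variational bound is saturated. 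Yours avoids the auxiliary vector and the span step, while the paper's reflection picture is the template it reuses in the proof of Theorem~\ref{thm:approx}.

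One wording fix in your general case: the claim ``two vertices at distance at most $2$ cannot lie one in $N[S]$ and one in $N[S']$'' is false as written. When $\operatorname{dist}(S,S')=3$, a neighbor of $S$ can be adjacent to a neighbor of $S'$. What you need is only that a vertex common to $N[S]$ and $N[S']$ would place a vertex of $S$ within distance $2$ of a vertex of $S'$.
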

\begin{proof}
    General case. Let $ R =\supp(\ket{\psi})$, and suppose $ R $ can be partitioned into nonempty $S,S'$ with $\operatorname{dist}(S,S')>2$. Then $N[S] \cap N[S'] = \emptyset$. Since no matrix elements of $H$ connect vertices in $ R $ to vertices outside of $N[ R ]$, ${H\ket{\psi} = \Pi_{N[ R ]}H\Pi_ R \ket{\psi}}$. Furthermore, since $N[S]\cap N[S'] = \emptyset$, $N[ R ]$ can be partitioned into $S, N(S), S',$ and $N(S')$. Thus, defining $\ket{\psi_S} = \Pi_S\ket{\psi}$ and $\ket{\psi_{S'}}= \Pi_{S'}\ket{\psi}$,
    \begin{align}\label{eq:gen-degen-01}
        H\ket{\psi} &= \left(\Pi_S + \Pi_{N(S)}+\Pi_{S'}+\Pi_{N(S')}\right)H\left(\Pi_S+\Pi_{S'}\right)\ket{\psi}\nonumber\\
        &= \big(\Pi_S + \Pi_{N(S)}\big)H\ket{\psi_S}+\big(\Pi_{S'}+\Pi_{N(S')}\big)H\ket{\psi_{S'}}.
    \end{align}
    Note that $\Pi_S,\Pi_{N(S)},\Pi_{S'},$ and $\Pi_{N(S')}$ are mutually orthogonal projectors. Therefore, matching terms with $H\ket{\psi} = \lambda\ket{\psi_S}+\lambda\ket{\psi_{S'}}$ yields
    \begin{gather}
        \Pi_SH\ket{\psi_S} = \lambda\ket{\psi_S},\ \Pi_{S'}H\ket{\psi_{S'}} = \lambda\ket{\psi_{S'}},\ \textrm{and}\nonumber\\
        \Pi_{N(S)}H\ket{\psi_S} = \Pi_{N(S')}H\ket{\psi_{S'}} = 0.
    \end{gather}
    Consequently, $\ket{\psi_S}$ is an eigenvector of $H$ with eigenvalue $\lambda$, since
    \begin{align}
        H\ket{\psi_S} = (\Pi_S + \Pi_{N(S)})H\ket{\psi_S} = \lambda\ket{\psi_S}.
    \end{align}
    An identical argument follows for $\ket{\psi_{S'}}$.

    Extremal case. Assume without loss of generality that $\ket{\psi}$ is a normalized lowest eigenvector, and suppose $ R $ can be partitioned into nonempty $S,S'$ with $\operatorname{dist}(S,S')>1$. Since $S,S'$ are disjoint, $\ket{\psi}$ can be written as ${\ket{\psi} = \ket{\psi_S}+\ket{\psi_{S'}}}$, where $\ket{\psi_S} = \Pi_S\ket{\psi}$ and $\ket{\psi_{S'}}=\Pi_{S'}\ket{\psi}$ are orthogonal. Since $S$ and $S'$ are separated by a minimum distance of $>1$, no vertices in $S$ are neighbors of $S'$, and vice versa. Hence, no nonzero elements of $H$ connect vertices in $S$ to those in $S'$, so $\bra{\psi_{S'}}H\ket{\psi_{S}}=0$. It follows that $\bra{\psi}H\ket{\psi} = \bra{\psi_S}H\ket{\psi_S} + \bra{\psi_{S'}}H\ket{\psi_{S'}}$. Now define the normalized state $\ket{\psi'} = \ket{\psi_S}-\ket{\psi_{S'}}$. Since $\bra{\psi'}H\ket{\psi'} = \bra{\psi_S}H\ket{\psi_S} + \bra{\psi_{S'}}H\ket{\psi_{S'}}$ as well, the Rayleigh--Ritz theorem (see Appendix \ref{appendix:rayleigh}) implies that $\ket{\psi'}$ must also be a lowest eigenvector. Lastly, since $\ket{\psi_S}$ and $\ket{\psi_{S'}}$ are in $\operatorname{Span}(\ket{\psi},\ket{\psi'})$, they are lowest eigenvectors too, as desired.
\end{proof}

In Appendix \ref{appendix:examples12}, we demonstrate that the sufficient conditions in Eq. (\ref{eq:degeneracy-distance}) are tight. Furthermore, we show that Theorem~\ref{thm:degeneracy} has no converse necessary condition: in particular, given two degenerate eigenvectors whose supports are disjoint sets, these sets need not be separated by a distance $>1$ in the graph.

The Rayleigh--Ritz theorem is the crucial resource which distinguishes the extremal case from the general one. In the extremal proof above, a state $\ket{\psi'}$ is constructed which shares the same energy expectation value as $\ket{\psi}$. Because this value is an extremal eigenvalue of $H$, Rayleigh--Ritz implies that $\ket{\psi'}$ is also an eigenvector with the same eigenvalue, as desired. This reasoning fails for non-extremal $\ket{\psi}$, since a state may have the same energy expectation value without being an eigenvector.

Although there exist certain solutions to eigenwalk problems whose vertices are distributed arbitrarily far apart in the graph, Theorem~\ref{thm:degeneracy} shows that such examples can be broken into smaller equivalent solutions which are more localized. In particular, if the original eigenvector's support could be divided into disjoint subsets which are sufficiently separated, these subsets would support degenerate eigenvectors and therefore also be admissible eigenwalk solutions. Given an eigenvector $\ket{\psi}$, define its \textit{irreducible support components} to be the parts of the finest partition of $\supp(\ket{\psi})$ such that any two distinct parts have graph distance greater than $\tau$, where $\tau=1$ if $\ket{\psi}$ is extremal and $\tau=2$ otherwise. Note that irreducibility is only with respect to the separation criterion of Theorem~\ref{thm:degeneracy}: although each irreducible support component supports an eigenvector degenerate with $\ket{\psi}$, smaller subsets may still in principle support eigenvectors with the same eigenvalue.

As a brief detour, the notion of irreducible support components enables a more rigorous description of how the eigenvalue equation may restrict graph structure, per the following theorem:
\vspace{20pt}
\begin{theorem}[Eigenvalue equation restricts graph structure]\label{thm:eigstruct}
    Let $G=(V,E)$ be a simple graph and let $S\subseteq V$. If there exists a $G$-consistent Hermitian matrix $H$ admitting an eigenvector $\ket{\psi}$ with $\supp(\ket{\psi})=S$, then for every irreducible support component $T$ of $\ket{\psi}$ and every non-support vertex $v\in V\setminus S$, $v$ cannot be adjacent to exactly one vertex of $T$.
\end{theorem}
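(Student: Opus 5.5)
Our plan is to reduce to a single irreducible component using Theorem~\ref{thm:degeneracy}, and then look at the eigenvalue equation at the vertex $v$. Let $T$ be an irreducible support component of $\ket{\psi}$. By definition, the irreducible support components partition $S=\supp(\ket{\psi})$, and any two distinct components are at graph distance greater than $\tau$. Applying Theorem~\ref{thm:degeneracy} with the partition $T$ versus $S\setminus T$ (if $S\setminus T$ is nonempty) gives an eigenvector $\ket{\psi_T}$ of $H$ with eigenvalue $\lambda$ and $\supp(\ket{\psi_T})=T$. If $S\setminus T$ is empty, we simply take $\ket{\psi_T}=\ket{\psi}$. In the non-extremal case the distance between $T$ and $S\setminus T$ is greater than $2$, and in the extremal case it is greater than $1$, which is exactly the hypothesis Theorem~\ref{thm:degeneracy} needs in each case. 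From the proof of that theorem we may take $\ket{\psi_T}=\Pi_T\ket{\psi}$, although the argument only uses its support.

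Now fix a non-support vertex $v\in V\setminus S$. Since $T\subseteq S$, $v$ is not in $T$, so the $v$-th component of $\ket{\psi_T}$ vanishes. Taking the $v$-th component of $H\ket{\psi_T}=\lambda\ket{\psi_T}$ gives
\begin{align}
    0=\lambda(\psi_T)_v=\bra{v}H\ket{\psi_T}=\sum_{j\in T} H_{vj}(\psi_T)_j = \sum_{j\in T\cap N(\{v\})} H_{vj}\psi_j .
\end{align}
The last equality uses $G$-consistency: for $j\neq v$, $H_{vj}\neq 0$ exactly when $\{v,j\}\in E$. Now suppose $v$ were adjacent to exactly one vertex $u\in T$. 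The sum would collapse to the single term $H_{vu}\psi_u$. Here $H_{vu}\neq 0$ because $\{v,u\}$ is an edge, and $\psi_u\neq 0$ because $u\in T\subseteq\supp(\ket{\psi})$. So the term is nonzero, which contradicts the equation above. Hence $v$ cannot be adjacent to exactly one vertex of $T$.

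The only step needing care is the reduction from $\ket{\psi}$ to $\ket{\psi_T}$. Working with $\ket{\psi}$ directly would not suffice, because $v$ might also neighbor vertices in other components, and their contributions could cancel the one from $T$. Theorem~\ref{thm:degeneracy} removes this issue. The one point to check is that the theorem's hypothesis is the distance between $T$ and all of $S\setminus T$. This holds because $\operatorname{dist}(T,S\setminus T)$ is the minimum over the other components $T'$ of $\operatorname{dist}(T,T')$, and each of these exceeds $\tau$. The extremal threshold $\tau=1$ is justified since $\ket{\psi}$ itself is the extremal eigenvector to which the theorem is applied.
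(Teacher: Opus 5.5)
Your proof is correct and follows the paper's argument. Both project onto $T$, invoke Theorem~\ref{thm:degeneracy} to get the degenerate eigenvector $\Pi_T\ket{\psi}$, and read off the $v$-component of its eigenvalue equation, where the single surviving term $H_{vu}\psi_u$ is nonzero. You also spell out two points the paper leaves implicit: the case $S\setminus T=\emptyset$, and the check that $\operatorname{dist}(T,S\setminus T)>\tau$.
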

\begin{proof}
    Suppose there exists such an $H$ and $\ket{\psi}$ with eigenvalue $\lambda$, and assume that some vertex $v\in V\setminus S$ is adjacent to exactly one vertex $v'\in T$, where $T$ is an irreducible support component of $\ket{\psi}$. Let $\ket{\psi_T} = \Pi_T\ket{\psi}$. Then 
    \begin{align}
        H\ket{\psi_T} = \Pi_V H\ket{\psi_T} =\Pi_{V\setminus v}H\ket{\psi_T} + H_{vv'}c_{v'}\ket{v},
    \end{align}
    where $c_{v'} = \braket{v'|\psi_T}$ is nonzero since $v'\in T$. Since $v$ and $v'$ are adjacent, $H_{vv'}\neq 0$ too, so $\bra{v}H\ket{\psi_T} \neq 0$. At the same time, Theorem~\ref{thm:degeneracy} implies that $\ket{\psi_T}$ is an eigenvector degenerate with $\ket{\psi}$, so $\bra{v}H\ket{\psi_T} = \lambda \braket{v|\psi_T}$ as well. However, since $v\notin T$, $\ket{v}$ and $\ket{\psi_T}$ must be orthogonal, so $\bra{v}H\ket{\psi_T}$ must be zero, leading to a contradiction. 
 \end{proof}

Theorem~\ref{thm:eigstruct} reveals an interesting graph-structural phenomenon that can occur only in the non-extremal setting. Consider a three-vertex path graph $G$ and let $S$ be the set of two endpoint vertices. It is possible to construct a $G$-consistent $H$ with a nondegenerate non-extremal eigenvector $\ket{\psi}$ supported on $S$ (see Appendix \ref{appendix:examples12})---this possibility is consistent with the above theorem, since $S$ itself is the sole irreducible support component, and the middle non-support vertex is adjacent to two vertices in $S$. The support of $\ket{\psi}$ is only connected through the middle ``virtual" vertex outside the support, yet it is impossible to decompose $\ket{\psi}$ into sparser degenerate eigenvectors (since $\ket{\psi}$ is constructed to be nondegenerate). Intriguingly, such a virtual vertex cannot exist in the extremal setting: if $\ket{\psi}$ were extremal, each support vertex would be a distinct irreducible support component, and the middle vertex would be adjacent to exactly one vertex per component. Due to Theorem~\ref{thm:eigstruct}, it would then be impossible to find \textit{any} extremal eigenvector supported on $S$, even if it were allowed to be degenerate.

\begin{figure}[htbp]
    \centering
    \includegraphics[width = 0.7\linewidth]{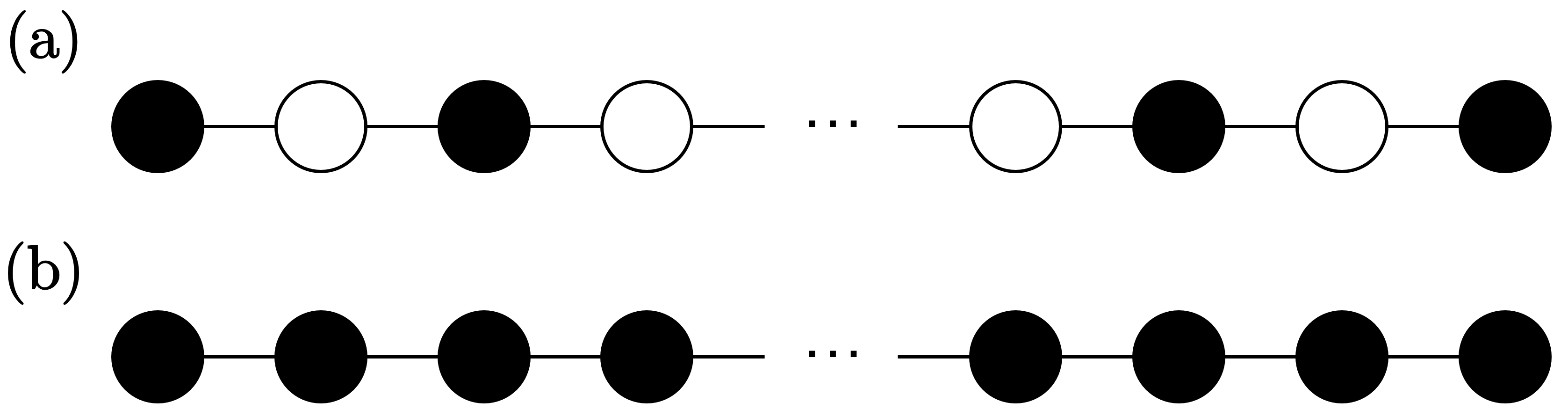}
    \caption{Example graphs saturating diameter bound for irreducible support components. Let the subset $T$ of shaded vertices be an irreducible support component of some eigenvector $\ket{\psi}$. If $\ket{\psi}$ is non-extremal, the alternating path graph in \textbf{(a)} realizes the maximum possible diameter $2|T|-2$. If instead $\ket{\psi}$ is extremal, the path graph in \textbf{(b)} realizes the maximum possible diameter $|T|-1$.} 
    \label{fig:paths}
\end{figure}

\subsection{Localization of irreducible support components}\label{sec:irreducible}
Returning to the topic of eigenwalks, recall that irreducible support components provide equivalent solutions to an eigenwalk problem; we now show that these components must be tightly localized on the graph. The intuition is simple: if an irreducible support component containing a fixed number of vertices had too large a diameter (i.e., maximum distance between any two of its vertices), then it would admit a further partition into subsets separated beyond the relevant distance threshold, contradicting irreducibility. Let $T$ be an irreducible support component of some eigenvector, and let $\tau = 2$ when the eigenvector is non-extremal and $\tau=1$ when it is extremal. Since $T$ is irreducible, it cannot be split into subsets separated by a distance greater than $\tau$. Consequently, any two vertices in $T$ must be connected by some simple path where consecutive members of $T$ along the path are separated by at most $\tau$ edges (see Figure \ref{fig:paths}). Such a path visits at most $\abs{T}$ vertices of $T$, so there are at most $\abs{T}-1$ pairs of successive vertices from $T$. Since each of these pairs is separated by a distance of at most $\tau$, the total path length is at most $\tau(\abs{T}-1)$. Hence, the diameter of $T$ is also at most $\tau(\abs{T}-1)$. The theorem below, which follows from Theorem~\ref{thm:degeneracy}, states this result more rigorously.

\begin{theorem}[Localization of irreducible support components]\label{thm:localized}
    Let $G$ be a simple graph, and let $H$ be a $G$-consistent Hermitian matrix with an eigenvector $\ket{\psi}$ corresponding to eigenvalue $\lambda$. Set $s=\abs{\supp(\ket{\psi})}$. Then for every vertex $v\in\supp(\ket{\psi})$, there exists an eigenvector $\ket{\psi_v}$ of $H$ with eigenvalue $\lambda$ such that $v\in\supp(\ket{\psi_v})$ and
    \begin{align}
        \operatorname{diam}_G\bigl(\supp(\ket{\psi_v})\bigr) \leq
        \begin{cases}
            2s-2, & \text{if } \ket{\psi} \text{ is non-extremal},\\[4pt]
            s-1, & \text{if } \ket{\psi} \text{ is extremal},
        \end{cases}
    \end{align}
    where $\operatorname{diam}_G(T):=\max_{i,j\in T}\operatorname{dist}(i,j)$ denotes the graph diameter of $T$ in $G$.
\end{theorem}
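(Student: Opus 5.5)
The plan is to take $\ket{\psi_v}=\Pi_T\ket{\psi}$, where $T$ is the irreducible support component of $\ket{\psi}$ containing $v$. We then show that $\ket{\psi_v}$ is an eigenvector with eigenvalue $\lambda$ and that $\operatorname{diam}_G(T)\le\tau(\abs{T}-1)\le\tau(s-1)$. Here $\tau=2$ in the non-extremal case and $\tau=1$ in the extremal case. Note that $\tau$ is defined by the extremality of the \emph{original} $\ket{\psi}$, which matches the statement.

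\textbf{Step 1: constructing the components.} Define an auxiliary graph $\Gamma_\tau$ on $R=\supp(\ket{\psi})$, joining $i,j\in R$ whenever $\operatorname{dist}_G(i,j)\le\tau$. Any partition of $R$ whose parts are pairwise at distance $>\tau$ must keep each connected component of $\Gamma_\tau$ inside a single part. Conversely, the components of $\Gamma_\tau$ themselves form such a partition. So the finest such partition is exactly the set of connected components of $\Gamma_\tau$, which makes the irreducible support components well defined. Let $T$ be the component containing $v$.

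\textbf{Step 2: $\Pi_T\ket{\psi}$ is an eigenvector.} If $T=R$, then $\ket{\psi_v}=\ket{\psi}$ and there is nothing to prove. Otherwise, set $S=T$ and $S'=R\setminus T$. Both are nonempty, and $\operatorname{dist}(S,S')>\tau$ because $T$ is a union of whole components of $\Gamma_\tau$, namely exactly one. Theorem~\ref{thm:degeneracy} then gives an eigenvector $\Pi_T\ket{\psi}$ with eigenvalue $\lambda$ and support exactly $T\ni v$. The eigenvector produced in the proof of Theorem~\ref{thm:degeneracy} is literally $\Pi_S\ket{\psi}$, so its support is $T$.

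\textbf{Step 3: the diameter bound.} Take $i,j\in T$. Since $T$ is connected in $\Gamma_\tau$, there is a simple path $i=u_0,u_1,\dots,u_r=j$ in $\Gamma_\tau$ with distinct vertices, so $r\le\abs{T}-1$. Each step satisfies $\operatorname{dist}_G(u_{k-1},u_k)\le\tau$. Concatenating shortest $G$-paths and applying the triangle inequality gives
\[
\operatorname{dist}_G(i,j)\le\tau r\le\tau(\abs{T}-1)\le\tau(s-1).
\]
This yields $2s-2$ in the non-extremal case and $s-1$ in the extremal case.

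\textbf{Main obstacle.} The only delicate point is the bookkeeping in Step 1. We must check that $T$ is separated from the rest of $R$ by distance $>\tau$, so that Theorem~\ref{thm:degeneracy} applies, while $T$ remains internally connected in $\Gamma_\tau$, so that the path argument works. Identifying the irreducible components with the connected components of $\Gamma_\tau$ handles both requirements at once.
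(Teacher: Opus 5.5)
Your proposal is correct and follows essentially the paper's argument: you take the irreducible component containing $v$, invoke Theorem~\ref{thm:degeneracy} to get an eigenvector supported on it, and bound the diameter by counting along a path. The only difference is cosmetic. The appendix proof uses the subgraph $G'$ on $N[R]$ (edges with at least one endpoint in $R$) and bounds a shortest path by counting alternating support and non-support vertices. Your distance-$\le\tau$ graph $\Gamma_\tau$ on $R$ has the same components as $G'$ restricted to $R$, and lets you finish with the triangle inequality, exactly as in the paper's main-text sketch.
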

\begin{proof}
    See Appendix \ref{appendix:general}.
\end{proof}
Note that the above upper bound depends only on the size of $\supp(\ket{\psi})$ and is independent of the total number of vertices $\abs{V}$ in the graph. Additionally, this bound is tight (see Appendix \ref{appendix:examples3}).

The ground states involved in realistic settings are often only approximately sparse; that is, most of the eigenvector's amplitude is concentrated on a small support space, with some residual component outside the space. As an aside, we now show that the result of Theorem~\ref{thm:localized} is robust to approximate sparsity, at least in the extremal case. Informally, we show that if an extremal eigenvector $\ket{\psi}$ is approximately sparse, then there exists a sparse state $\ket{\psi'}$ that is close in energy to $\ket{\psi}$ and whose support is tightly localized on $G$.

Before formalizing this robustness result in the theorem below, it is necessary to precisely define approximate sparsity. Given a graph $G=(V,E)$, a vector $\ket{\psi} = \sum_{j\in V} \psi_j \ket{j}$ is said to be $(\eta,\epsilon)$-sparse over the vertex set $ A \subseteq V$ if $\abs{\psi_j}^2\geq \eta$ for all $j\in  A $ and $\sum_{j\in V\setminus  A } \abs{\psi_j}^2\leq \epsilon$ for some nonnegative parameters $\eta<1,\epsilon<\frac{9}{25}$. Intuitively, the parameter $\eta$ bounds the \textit{local} weights on each of the approximate-support vertices, while $\epsilon$ bounds the \textit{global} weight lying outside the approximate support. 

\begin{theorem}[Approximately sparse support localization]\label{thm:approx}
    Let $G$ be a simple graph, and let $H$ be a $G$-consistent Hermitian matrix with a normalized extremal eigenvector $\ket{\psi}$ corresponding to eigenvalue $\lambda$. Suppose $\ket{\psi}$ is $(\eta,\epsilon)$-sparse over the vertex set $ A \subseteq V$, and let $s = \abs{ A }$. Then, given any vertex $v\in  A $, there exists an exactly sparse normalized state $\ket{\psi_v}$ with $v\in\supp(\ket{\psi_v})$ and $\supp(\ket{\psi_v})\subseteq  A $ such that
    \begin{align}\label{eq:approx-cluster}
        \operatorname{diam}_G\bigl(\supp(\ket{\psi_v})\bigr) \leq s-1
    \end{align}
    and 
    \begin{align}\label{eq:approx-energy}
        \abs{\bra{\psi_v}H\ket{\psi_v}-\lambda} \leq \frac{2\norm{H}}{\eta}\sqrt{\epsilon} + \mathcal{O}(\epsilon).
    \end{align}
\end{theorem}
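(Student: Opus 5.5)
The plan is to build $\ket{\psi_v}$ by truncating $\ket{\psi}$ to the approximate support $A$, cutting $A$ into well-separated pieces exactly as in the definition of irreducible support components (threshold $\tau=1$), and keeping only the piece that contains $v$. The diameter bound will then follow combinatorially, as in the discussion preceding Theorem~\ref{thm:localized}. The energy bound will come from extremality: each piece carries a nonnegative ``excess energy'', and these excesses sum to something small.

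\textbf{Construction.} Assume without loss of generality that $\lambda$ is the lowest eigenvalue. Set $\ket{\phi}=\Pi_A\ket{\psi}$ and $\ket{r}=\ket{\psi}-\ket{\phi}$, so that $\norm{\ket{r}}^2\le\epsilon$. Partition $A$ into the parts $T_1,\dots,T_p$ of the finest partition in which distinct parts are at graph distance $>1$; equivalently, these are the connected components of the subgraph induced by $A$. Let $T$ be the part containing $v$, and define $\ket{\psi_v}=\Pi_T\ket{\psi}/\norm{\Pi_T\ket{\psi}}$.

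Every vertex of $A$ has $\abs{\psi_j}^2\ge\eta>0$, so $\supp(\ket{\psi_v})=T\ni v$ and $T\subseteq A$. Because $T$ is connected in the induced subgraph on $A$, any two of its vertices are joined by a path inside $T$, which has length at most $\abs{T}-1\le s-1$. This gives Eq.~(\ref{eq:approx-cluster}). We also have $\norm{\Pi_T\ket{\psi}}^2\ge\abs{\psi_v}^2\ge\eta$.

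\textbf{Energy bound.} Write $\ket{\phi_i}=\Pi_{T_i}\ket{\psi}$. Since there are no edges between distinct $T_i$, all cross terms vanish:
\begin{align}
\bra{\phi}(H-\lambda)\ket{\phi}=\sum_i \bra{\phi_i}(H-\lambda)\ket{\phi_i}.
\end{align}
By Rayleigh--Ritz (extremality), every summand is nonnegative. For the total, use $(H-\lambda)\ket{\psi}=0$ to write
\begin{align}
\bra{\phi}(H-\lambda)\ket{\phi}=\bra{\psi-r}(H-\lambda)\ket{\psi-r}=\bra{r}(H-\lambda)\ket{r}\le 2\norm{H}\,\epsilon.
\end{align}
Since every summand is nonnegative, the summand for $T$ is bounded by the total:
\begin{align}
0\le\bra{\psi_v}H\ket{\psi_v}-\lambda\le 2\norm{H}\epsilon/\eta.
\end{align}

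\textbf{Comparison with the stated bound.} If this expansion holds, the result is at least as strong as Eq.~(\ref{eq:approx-energy}). For $\epsilon<1$ we have $\epsilon\le\sqrt\epsilon$, so $2\norm{H}\epsilon/\eta\le 2\norm{H}\sqrt{\epsilon}/\eta$. A cruder Cauchy--Schwarz route, bounding $\abs{\bra{\phi}(H-\lambda)\ket{r}}$, would give the $\sqrt{\epsilon}$ form directly.

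\textbf{Main obstacle.} The key step is the componentwise control. Without extremality, one piece could lie below $\lambda$ and another above it, and their errors could cancel in the sum. The Rayleigh--Ritz sign argument is what rules this out, exactly as in the extremal case of Theorem~\ref{thm:degeneracy}. I would double-check that the cross terms vanish for the diagonal-free blocks, and that normalizing by $\norm{\Pi_T\ket{\psi}}^2\ge\eta$, rather than by $\norm{\ket{\phi}}^2$, is legitimate. The hypothesis $\epsilon<9/25$ presumably serves only to keep normalizations and the $\mathcal{O}(\epsilon)$ terms controlled in the authors' version of this argument.
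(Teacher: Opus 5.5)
Your proof is correct. The construction of $\ket{\psi_v}$ and the diameter argument match the paper's: both use the normalized projection of $\ket{\psi}$ onto the connected component of the induced subgraph on $A$ that contains $v$. Your energy estimate, however, takes a genuinely different and sharper route.

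\textbf{The paper's route.} It works with the normalized truncation $\ket{\phi}$ and introduces the reflected state $\ket{\phi'}$, obtained by flipping the sign of the part on $A\setminus S$. The absence of cross edges gives $\bra{\phi}H\ket{\phi}=\bra{\phi'}H\ket{\phi'}$. It then writes $\ket{\psi_S}\propto\ket{\phi}+\ket{\phi'}$ and bounds every term of the expansion by the residual $\norm{(H-\lambda)\ket{\phi}}\le 2\norm{H}\norm{\ket{\phi}-\ket{\psi}}$, which is only $\mathcal{O}(\sqrt{\epsilon})$. The hypothesis $\epsilon<9/25$ serves only to make the resulting closed-form expression monotone in $\abs{\gamma}^2$.

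\textbf{Your route.} You use the exact identity $\bra{\phi}(H-\lambda)\ket{\phi}=\bra{r}(H-\lambda)\ket{r}$, which is quadratic in the leaked amplitude, together with componentwise positivity of $H-\lambda$. This gives the one-sided, remainder-free bound $0\le\bra{\psi_v}H\ket{\psi_v}-\lambda\le 2\norm{H}\epsilon/\eta$. It implies the stated inequality, needs no $9/25$ restriction, and stays informative whenever $\eta\gg\epsilon$. The paper, by contrast, flags $\eta=\Omega(\sqrt{\epsilon})$ as necessary for its bound.

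\textbf{What each buys.} Your argument genuinely needs extremality. The paper's energy argument never uses it: only the residual and the graph structure enter, so its $\sqrt{\epsilon}$ bound holds for any eigenvector.

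Your ``Main obstacle'' remark therefore somewhat overstates the role of extremality. Even without it, your decomposition gives $\bra{\phi_T}(H-\lambda)\ket{\phi_T}=-\bra{\phi_T}(H-\lambda)\ket{r}$ with $\ket{\phi_T}=\Pi_T\ket{\psi}$. Hence $\abs{\bra{\psi_v}H\ket{\psi_v}-\lambda}\le 2\norm{H}\sqrt{\epsilon/\eta}$, which already beats the paper's leading term. Extremality is exactly what upgrades $\sqrt{\epsilon}$ to $\epsilon$.
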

\begin{proof}
    See Appendix~\ref{appendix:approx} and Figure~\ref{fig:thm4proof}.
\end{proof}
We note that for the bound in Eq. (\ref{eq:approx-energy}) to not diverge as $\eta,\epsilon\rightarrow 0$, it is necessary for $\eta$ to scale as $\Omega(\sqrt{\epsilon})$. This assumption is not always reasonable in physical systems. However, by invoking an additional natural assumption about the guiding vertex $v$, the leading term of Eq. (\ref{eq:approx-energy}) can be modified to scale only with $\sqrt{\epsilon}$, with no explicit or problematic dependence on $\eta$ (see Appendix \ref{appendix:approx}).

\section{Solution to the eigenwalk problem}\label{sec:algo}
We now describe a polynomial-time algorithm for solving guided eigenwalk problems when the underlying eigenvector is exactly sparse. The theorem below, which follows from Theorem~\ref{thm:localized}, summarizes the main result. Note that for any positive integer $s$, a state is said to be $s$-sparse if its support has size $\leq s$.
\begin{theorem}[Polynomial-time algorithm for the sparse guided eigenwalk problem]\label{thm:alg}
    Suppose the following are given:
    \begin{itemize}
        \item A simple graph $G=(V,E)$ of maximum degree $d$,
        \item An integer $s$ satisfying $2\leq s\leq |V|$ (eigenvector sparsity),
        \item Positive integers $m\leq |V|$ (number of distinct eigenvalues) and $k\leq m$ (target eigenvalue index),
        \item Sparse row access to a $G$-consistent Hermitian matrix $H$ with unknown distinct eigenvalues ${\lambda_1<\ldots<\lambda_m}$, promised to have an unknown {$s\textrm{-sparse}$} eigenvector $\ket{\psi}$ with eigenvalue $\lambda_k$,
        \item A guiding vertex $v$ promised to belong to $\supp(\ket{\psi})$.
    \end{itemize}
    If $k\in\{1,m\}$ (i.e., $k$ is extremal), then there exists a classical algorithm that returns the support of an eigenvector with eigenvalue $\lambda_k$ in time
    \begin{align}
        \mathcal{O}(d^{3s-3}).
    \end{align}
    Otherwise, if $k\notin\{1,m\}$ (i.e., $k$ is non-extremal) and every eigenvector with eigenvalue not equal to $\lambda_k$ has support size strictly larger than $s$, then such an algorithm exists that runs in time
    \begin{align}
        \mathcal{O}(d^{(2s-2)(s+3)}).
    \end{align}
\end{theorem}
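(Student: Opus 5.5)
The plan is to combine Theorem~\ref{thm:localized} with a bounded breadth-first search from the guiding vertex $v$, followed by exact local linear algebra. Let $\tau=1$ in the extremal case and $\tau=2$ otherwise, and set $r=\tau(s-1)$. Using sparse row access, I would run BFS from $v$ to radius $r$ and obtain the ball $B=\{u:\operatorname{dist}(v,u)\le r\}$. Then $|B|=\mathcal{O}(d^{r})$, and its boundary layer $N(B)$ has size $\mathcal{O}(d^{r+1})$. Since $v\in\supp(\ket{\psi})$ and $|\supp(\ket{\psi})|\le s$, Theorem~\ref{thm:localized} gives an eigenvector $\ket{\psi_v}$ with eigenvalue $\lambda_k$, $v\in\supp(\ket{\psi_v})$, and diameter at most $r$. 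Hence $\supp(\ket{\psi_v})\subseteq B$. Going into the proof in Appendix~\ref{appendix:general}, I would also record that $\ket{\psi_v}$ is the restriction of $\ket{\psi}$ to the irreducible support component containing $v$. Therefore $\supp(\ket{\psi_v})\subseteq\supp(\ket{\psi})$, so $\ket{\psi_v}$ is itself $s$-sparse.

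The core subroutine decides, for a given vertex set $U$, which eigenvectors of $H$ are supported inside $U$, and computes them exactly. A vector $x$ supported on $U$ is an eigenvector of $H$ if and only if two conditions hold. First, $\Pi_{N(U)}H\Pi_U x=0$; the solutions of this linear system form a subspace $W_0$. Second, $x$ is an eigenvector of $A=\Pi_U H\Pi_U$ lying in $W_0$. The set of such $x$ spans the largest $A$-invariant subspace $W^\ast$ contained in $W_0$. I would compute $W^\ast$ by the iteration $W_{i+1}=W_i\cap A^{-1}W_i$, which stabilizes after at most $|U|$ steps. I would then diagonalize $A|_{W^\ast}$, which is Hermitian on an invariant subspace. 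This yields every eigenpair $(\mu,x)$ of $H$ with $\supp(x)\subseteq U$. To return a valid support, I would take a generic combination, for example a random element, of the $\mu$-eigenspace within $W^\ast$; its support is the union of the supports in that eigenspace. The cost is polynomial in $|U|$ and $|N(U)|$, and I would bound it by $\mathcal{O}(|B|^3)=\mathcal{O}(d^{3s-3})$-type terms when $U=B$. I expect the fiddly part to be this cost bookkeeping, for example handling $|N(B)|\cdot|B|^2$ via a compressed basis of the constraint rows, rather than anything conceptual.

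In the extremal case, say $k=1$, I would apply the subroutine to $U=B$ and output the eigenvector with the smallest $\mu$ found. Every $\mu$ found is a genuine eigenvalue of $H$, so $\mu\ge\lambda_1$. Since $\ket{\psi_v}$ is supported in $B$, the value $\lambda_1$ is among those found. So the minimum is exactly $\lambda_1$, even though $\lambda_1$ was unknown; for $k=m$ I would take the maximum instead. The non-extremal case is the main obstacle, because comparing eigenvalues found inside $B$ no longer identifies the $k$th one. This is where the sparsity-separation promise enters. I would enumerate all subsets $S\subseteq B$ with $v\in S$ and $|S|\le s$; there are $\mathcal{O}(|B|^{s})=\mathcal{O}(d^{(2s-2)s})$ of them. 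For each $S$ I would run the subroutine with $U=S$. The set $\supp(\ket{\psi_v})$ appears in the enumeration, so some $S$ yields an eigenvector. Conversely, any eigenvector found has support of size at most $s$, so by the promise its eigenvalue must equal $\lambda_k$; thus the first success can be returned. Bounding the per-subset work by $\mathcal{O}(d^{3(2s-2)})$ (crudely, by the ball size) gives the stated total $\mathcal{O}(d^{(2s-2)(s+3)})$.
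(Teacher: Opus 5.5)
\paragraph{Extremal case: the running time is not established.} Your correctness argument there is sound, and it avoids Rayleigh--Ritz: the minimum over genuine eigenvalues found must be $\lambda_1$. The problem is the $\mathcal{O}(d^{3s-3})$ bound, because the cost you defer as bookkeeping is the crux once you run the invariant-subspace subroutine on $U=B$.

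\begin{itemize}
\item The natural bound for the iteration $W_{i+1}=W_i\cap A^{-1}W_i$ is up to $\dim W_0\le|B|$ rounds of cubic linear algebra, i.e.\ $\mathcal{O}(d^{4s-4})$. Rank-one constraints on a path already force about $|B|$ rounds.
\item Dense elimination on the $|N(B)|\times|B|$ matrix $C=\Pi_{N(B)}H\Pi_B$ alone costs $\mathcal{O}(d^{s}\cdot d^{s-1}\cdot d^{s-1})=\mathcal{O}(d^{3s-2})$.
\end{itemize}

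This can be repaired. Form $C^\dagger C$ sparsely in $\mathcal{O}(d^2|B|)$ time and diagonalize $A$ once. For each eigenspace with orthonormal basis $Q_\mu$, take $Q_\mu\ker(Q_\mu^\dagger C^\dagger C Q_\mu)$. The total is $\mathcal{O}(|B|^3)$.

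The paper's route makes the subroutine unnecessary in this case. Every unit $x\in\operatorname{Span}(B)$ satisfies $\bra{x}A\ket{x}=\bra{x}H\ket{x}\ge\lambda_1$, with equality at $\ket{\psi_v}$. So the lowest eigenvalue of $A$ is $\lambda_1$, and by Rayleigh--Ritz \emph{every} lowest eigenvector of $A$ is already a $\lambda_1$-eigenvector of $H$. A single diagonalization of the $|B|\times|B|$ compression, costing $\mathcal{O}(d^{3s-3})$, therefore suffices, with no neighborhood constraints. Your random combination is also unnecessary: any nonzero vector of the found eigenspace is a valid answer.

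\paragraph{Non-extremal case: correct, and a genuinely different route.} The paper diagonalizes $\Pi_BH\Pi_B$ on the radius-$(2s-2)$ ball. It keeps projected eigenvectors with zero $H$-variance and groups them by eigenvalue. It then runs a rank test over all $s$-subsets of $B$ to decide whether a group's span contains an $s$-sparse vector, for $\mathcal{O}(|B|^{s+3})$ total.

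You reverse the order. You first enumerate candidate supports $S\ni v$ with $|S|\le s$ inside $B$. For each one you solve an exact problem with an $|N(S)|\times|S|$ constraint matrix, where $|N(S)|\le sd$. Sparsity separation then certifies that anything found has eigenvalue $\lambda_k$.

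This buys two things:
\begin{itemize}
\item \emph{Running time.} You get roughly $|B|^{s-1}\poly(s)\,d$ after the breadth-first search, comfortably inside the stated bound.
\item \emph{Robustness.} Your kernel-intersection characterization finds every $H$-eigenvector supported in $S$. By contrast, testing only the individual vectors returned by diagonalization for zero variance does not by itself recover an $H$-eigenvector in a degenerate eigenspace of $\Pi_BH\Pi_B$ that also contains non-eigenvectors of $H$.
\end{itemize}

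You are also right to extract from the proof of Theorem~\ref{thm:localized} that $\ket{\psi_v}$ is a restriction of $\ket{\psi}$ and hence $s$-sparse. The theorem's statement gives only the diameter bound, and both your enumeration and the paper's sparse-vector test rely on this sparsity.
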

\begin{proof}
    See Appendix \ref{appendix:alg}.
\end{proof}
In particular, when $G$ contains $2^n$ vertices, the maximum degree $d$ of $G$ is $\poly(n)$, and the eigenvector sparsity $s$ is $\mathcal{O}(1)$, the algorithm runs in $\poly(n)$ time. The requirement $d= \mathcal{O}(\poly (n))$ is equivalent to the requirement that $H$ is sparse, with $\mathcal{O}(\poly (n))$ nonzero entries per column. Note that $H$ need not be a local Hamiltonian; it need only be sparse. We remark that the algorithm of Theorem \ref{thm:alg} not only computes the desired eigenvector support, but also computes the eigenvalue and the coefficients of the eigenvector.

\begin{figure}[htbp]
\centering
\resizebox{\columnwidth}{!}{%
\begin{tikzpicture}[
  every node/.style={circle, draw=black, fill=white, inner sep=0pt,
                     minimum size=17pt, line width=0.6pt},
  support/.style   = {fill=black!80},
  guiding/.style   = {fill=black!80, text=white},
  collected/.style = {fill=white},
  faded/.style     = {draw=black!27, fill=white},
  fsupport/.style  = {draw=black!27, fill=black!22},
  lbl/.style  = {draw=none, fill=none, rectangle, font=\small,
                 inner sep=2pt, minimum size=0pt},
  lbl2/.style = {draw=none, fill=none, rectangle, font=\small,
                 inner sep=2pt, minimum size=0pt},
  lbl3/.style = {draw=none, fill=none, rectangle, font=\large,
                 inner sep=2pt, minimum size=0pt},
  lbltitle/.style = {draw=none, fill=none, rectangle, font=\Large,
                     inner sep=2pt, minimum size=0pt},
  fedge/.style = {draw=black!27, line width=0.50pt},
]


\node[support] (s1) at (0.00,  0.95) {};
\node[guiding] (v)  at (0.00, -0.95) {$v$};
\node          (l)  at (1.35,  0.00) {};

\node (p1) at (2.55, 0.00) {};
\node (p2) at (3.75, 0.00) {};

\node          (r)   at (4.95,  0.00) {};
\node[support] (sp1) at (6.30,  0.95) {};
\node[support] (sp2) at (6.30, -0.95) {};

\draw (s1)--(v);
\draw (v)--(l);
\draw (l)--(s1);

\draw (l)--(p1);
\draw (p1)--(p2);
\draw (p2)--(r);

\draw (r)--(sp1);
\draw (r)--(sp2);
\draw (sp1)--(sp2);

\draw[dashed, rounded corners=6pt, draw=black!70, line width=0.7pt]
  (-0.82,-1.42) rectangle (0.82,1.42);

\draw[dashed, rounded corners=6pt, draw=black!70, line width=0.7pt]
  (5.48,-1.42) rectangle (7.12,1.42);

\node[lbl3, anchor=east] at (-0.33, 0.00) {$S$};
\node[lbl3, anchor=west] at ( 6.63, 0.00) {$S'$};

\draw[<->, draw=black!60, line width=0.5pt]
  (0.55,-1.95) -- (5.75,-1.95)
  node[midway, below, lbl2, align=center]
  {$\operatorname{dist}(S,S')>\tau$\\[-1pt]
  $\begin{cases}
  \tau=1 & (\text{extremal})\\
  \tau=2 & (\text{non\mbox{-}extremal})
  \end{cases}$};

\node[lbltitle, anchor=south] at (3.15, 2.95)
  {\textbf{(a)}\;original eigenvector $\ket{\psi}$};


\fill[black!38]
  (7.80,-0.35) -- (9.90,-0.35)
  -- (9.90,-0.95)
  -- (10.95,0.00)
  -- (9.90,0.95)
  -- (9.90,0.35)
  -- (7.80,0.35)
  -- cycle;

\node[lbl2] at (9.35, 1.20) {Theorem~\ref{thm:localized}};

\node[lbl2, align=left] at (9.35, -1.25)
  {\textcircled{\raisebox{-0.8pt}{\footnotesize 1}}\;collect vertices in $B$};
\node[lbl2, align=left] at (9.35, -2.00)
  {\textcircled{\raisebox{-0.8pt}{\footnotesize 2}}\;project $H$ to $\operatorname{Span}(B)$};
\node[lbl2, align=left] at (9.35, -2.75)
  {\textcircled{\raisebox{-0.8pt}{\footnotesize 3}}\;diagonalize projection};
\begin{scope}[xshift=12.8cm]

\draw[dashed, draw=black!55, line width=0.8pt, fill=black!7]
  (1.45,0.00) circle (2.85cm);

\draw[fedge] (3.75, 0.00) -- (4.95, 0.00);
\draw[fedge] (4.95, 0.00) -- (6.30, 0.95);
\draw[fedge] (4.95, 0.00) -- (6.30,-0.95);
\draw[fedge] (6.30, 0.95) -- (6.30,-0.95);

\draw (0.00,  0.95) -- (0.00, -0.95);
\draw (0.00, -0.95) -- (1.35,  0.00);
\draw (1.35,  0.00) -- (0.00,  0.95);
\draw (1.35,  0.00) -- (2.55,  0.00);
\draw (2.55,  0.00) -- (3.75,  0.00);

\node[faded]    at (4.95, 0.00) {};
\node[fsupport] at (6.30, 0.95) {};
\node[fsupport] at (6.30,-0.95) {};

\node[support]   at (0.00,  0.95) {};
\node[guiding]   (vb) at (0.00, -0.95) {$v$};
\node[collected] at (1.35,  0.00) {};
\node[collected] at (2.55,  0.00) {};
\node[collected] at (3.75,  0.00) {};

\draw[dashed, rounded corners=6pt, draw=black!70, line width=0.7pt]
  (-0.82,-1.42) rectangle (0.82,1.42);

\draw[dashed, rounded corners=6pt, draw=black!32, line width=0.7pt]
  (5.48,-1.42) rectangle (7.12,1.42);

\node[lbl3, anchor=east] at (-0.33, 0.00) {$S$};
\node[lbl3, anchor=west, text=black!32] at (6.63, 0.00) {$S'$};
\node[lbl3] at (1.45, 2.00) {$B$};

\node[lbltitle, anchor=south] at (3.15, 2.95)
  {\textbf{(b)}\;localized eigenvector $\ket{\psi_v}$};

\end{scope}

\end{tikzpicture}%
}

\caption{Illustration of the polynomial-time algorithm for the guided sparse
eigenwalk problem (Theorem~\ref{thm:alg}).
\textbf{(a)}~The $s$ support vertices of a sparse eigenvector $\ket{\psi}$ may be split between two
subsets $S$ (containing the guiding vertex $v$) and $S'$ that are separated by more than $\tau$ edges. 
\textbf{(b)}~The same graph, with the ball $B$ of graph radius $s-1$ (extremal) or $2s-2$ (non-extremal) around $v$
shaded; all vertices outside $B$ are grayed.
By Theorem~\ref{thm:localized}, an eigenvector $\ket{\psi_v}$
with the same eigenvalue must exist whose support lies entirely inside
$B$.
The classical algorithm therefore
\textcircled{\raisebox{-0.8pt}{\footnotesize 1}}~collects the vertices in $B$,
\textcircled{\raisebox{-0.8pt}{\footnotesize 2}}~projects $H$ onto their span,
and
\textcircled{\raisebox{-0.8pt}{\footnotesize 3}}~exactly diagonalizes the projection,
yielding a solution in polynomial time.
The grayed set $S'$ and vertices outside $B$ need not be visited.}
\label{fig:algorithm}
\end{figure}

At a broad level, the algorithm leverages Theorem \ref{thm:localized}, which implies that there always exists a suitable eigenvector whose support is localized near the guiding vertex $v$ (see Figure~\ref{fig:algorithm}). Starting at $v$, the algorithm collects all vertices within distance $r$ of $v$ (where $r=s-1$ in the extremal case and $r = 2s-2$ otherwise), thereby ensuring that this localized support is contained in the collected set. Then, $H$ is projected to the span of these collected vertices before being exactly diagonalized. Any eigenvector of the full $H$ whose support lies in this span is also an eigenvector of the projection of $H$; hence, one of the eigenvectors obtained by diagonalizing the projection must correspond to a satisfactory solution. The algorithm identifies this solution eigenvector and simply returns its support. 

Between the extremal and non-extremal cases, the procedures for determining the solution eigenvector differ. The identification is simple in the extremal case, since the Rayleigh--Ritz theorem (see Appendix \ref{appendix:rayleigh}) implies that the lowest (highest) eigenvector of the projection is also a lowest (highest) eigenvector of the full Hamiltonian. However, locating the solution is more tricky in the non-extremal case, where the Rayleigh--Ritz theorem no longer applies. Recall that since the specific eigenvalue of the desired eigenvector is not known \textit{a priori}, the algorithm must instead identify this eigenvector by its eigenvalue index. For non-extremal eigenvectors, however, the projection step could obscure the eigenvalue ordering: for instance, a second excited state of $H$ could become a first excited state of the projected operator. To circumvent this issue, we demand an additional ``sparsity-separation" promise exclusively in the non-extremal setting, where every eigenvector outside the target eigenspace is guaranteed to have larger support than the desired eigenvector. Then, the solution can be identified from the eigenvectors of the projected Hamiltonian by filtering candidates according to sparsity.

As an immediate corollary, Theorem \ref{thm:alg} provides a polynomial-time algorithm for computing a ground state of a $2^n$-dimensional Hamiltonian $H$, provided that $H$ has $\poly(n)$ nonzero entries per column and one has access to a $\poly(n)$-sparse guiding state with nonzero overlap on a ground state of sparsity at most a known constant $s=\mathcal{O}(1)$. Since nonzero overlap implies that at least one vertex in the support of the
guiding state lies in the support of this ground state, the eigenwalk subroutine can be run from each vertex in the guiding support, and the
lowest-energy output can be selected to obtain a ground state.

If only an approximate ground state is required, the eigenwalk subroutine can be accelerated by replacing the exact diagonalization step with a Krylov algorithm such as the Lanczos method. In this case, $L$ Lanczos iterations cost approximately $\mathcal{O}(Ld^s)$, compared to the $\mathcal{O}(d^{3s-3})$ cost of exact diagonalization. The Kaniel--Paige--Saad bounds~\cite{kaniel1966estimates,paige1971computation,saad} imply that $L$ is polynomial under assumptions of polynomial spectral width, inverse-polynomial spectral gap, inverse-polynomial overlap between the guiding state and ground eigenspace, and inverse-polynomial additive precision. Thus, the Lanczos method can afford a speedup in practice when an approximation to the ground state is sufficient. However, when the \textit{exact} ground state support is desired (i.e., to explicitly solve the eigenwalk problem), the Lanczos method may fail: support is not stable under small perturbations, so recovering the
true support from an approximate eigenvector is not guaranteed.

\section{Discussion and conclusion} \label{sec:discussion}
In this paper, we introduced the eigenwalk problem and provided an efficient algorithm to solve it in the guided sparse setting. Even though we have primarily studied the eigenwalk problem in the context of computing quantum ground states, we note that the results established in this paper also extend to problems such as sparse principal component analysis, as discussed in Appendix~\ref{appendix:spca}. Nonetheless, our solution to the eigenwalk problem yields a polynomial-time algorithm for computing an $\mathcal{O}(1)$-sparse eigenvector of a $\poly(n)$-sparse Hamiltonian and recovering the eigenvector's exact support, given one guiding vertex in that support. 

Although this algorithm is proven to run in polynomial time, it may not be a replacement for existing ground-state heuristics. The polynomial exponent scales linearly with the ground-state sparsity, so the algorithm could rapidly become impractical for even modestly large supports. Furthermore, our runtime analysis only applies if the ground state is promised to be exactly sparse. In realistic scenarios, ground states are usually only approximately sparse, and heuristics may remain more effective.

With these caveats out of the way, what is unique about our algorithm is that it guarantees polynomial-time exact computation of sparse eigenvectors even when nothing is known about the spectrum of the Hamiltonian \textit{a priori}. In contrast, alternative methods for similar sparse eigenproblems are either heuristic or require that specific spectral criteria be met. For example, the truncated power method \cite{truncated} can approximate $\poly(n)$-sparse ground states in polynomial time, but this performance is only guaranteed under certain assumptions of nondegeneracy, spectral width and gap, and guiding-state overlap. Our eigenwalk explores only the unweighted connectivity graph of $H$, which encodes the positions of nonzero matrix entries but not their numerical values. Since many Hamiltonians with widely different spectra are consistent with the same graph, this graph does not preserve the same spectral information used by power methods. Furthermore, the localization of support vertices leveraged by our algorithm is a generic consequence of the eigenvalue-equation structure, independent of the particular eigenvalues involved. Our algorithm therefore requires no assumptions on degeneracy, spectral width or gap, or even the magnitude of the guiding-state overlap (provided that it is at least nonzero).

Our results also have implications for possible future quantum algorithms that could target the eigenwalk problem. A quantum algorithm genuinely beating our classical method in the setting of this paper would have to look entirely unlike the classical model: it would need an efficient coherent oracle for support identification, which is a qualitatively hard open problem, not a straightforward application of known quantum techniques. In particular, the bottleneck of our classical method is not graph traversal, but rather the subsequent diagonalization and post-processing steps that identify the desired solution from the collected ball, which may contain both support and non-support vertices. Hence, to genuinely speed up the eigenwalk, a quantum algorithm would need to identify support vertices coherently from within this ball, not merely traverse edges faster, and there is no clear mechanism for this. This work thus establishes an insightful boundary between classical algorithms and current quantum algorithmic primitives. 

\newpage
Beyond the problem of computing exact sparse eigenvector supports, our algorithm also informs the parameter regimes in which existing quantum algorithms could provide an advantage for the weaker task of ground state approximation. For example, if a ground state is extremely sparse, then our algorithm may become practical despite its poor asymptotic dependence on the sparsity. Consequently, Hamiltonians with extremely sparse ground states are unlikely to be suitable candidates for demonstrations of quantum advantage using techniques such as sample-based Krylov quantum diagonalization \cite{skqd}, since they would be efficiently solvable classically. 

Lastly, there remain several open directions. Firstly, it is not obvious whether the non-extremal eigenwalk problem remains efficiently solvable without the sparsity-separation promise. It is also natural to ask whether more sophisticated eigenwalk algorithms could be devised that intelligently incorporate the numerical weights of the Hamiltonian to guide the search more efficiently. More broadly, however, this work suggests a useful perspective for algorithm design: eigenvectors can be studied not only as linear-algebraic objects, but also through the graph structure imposed on their supports. This perspective may inform classical or quantum algorithms beyond ground-state computation, wherever the relevant solution can be encoded by the support of a sparse eigenvector.

\newpage
\appendix
\section*{Appendix}
The appendices provide the deferred proofs and additional examples that support the results of the main text.
Appendix~\ref{appendix:proofs} collects the proofs of the key theorems: the localization theorem (Theorem~\ref{thm:localized}), the approximately sparse localization theorem (Theorem~\ref{thm:approx}), and the polynomial-time algorithm (Theorem~\ref{thm:alg}); it also includes a self-contained proof of the Rayleigh--Ritz theorem used throughout.
Appendix~\ref{appendix:examples} constructs explicit graph examples demonstrating the tightness of the bounds in Theorems~\ref{thm:degeneracy}, \ref{thm:eigstruct}, and~\ref{thm:localized}. Appendix~\ref{appendix:spca} explains how the localization result of Theorem~\ref{thm:localized} can be extended to the setting of sparse principal component analysis.

\section{Proofs of theorems from the main text}\label{appendix:proofs}
In this appendix, we provide the remaining proofs of theorems from the main text.
Section~\ref{appendix:general} proves Theorem~\ref{thm:localized} (localization of irreducible support components).
Section~\ref{appendix:approx} proves Theorem~\ref{thm:approx} (approximately sparse support localization).
Section~\ref{appendix:alg} proves Theorem~\ref{thm:alg} (the polynomial-time eigenwalk algorithm).
Section~\ref{appendix:rayleigh} provides a self-contained statement and proof of the Rayleigh--Ritz theorem, which is invoked in several of the preceding proofs. 

\subsection{Proof of Theorem \ref{thm:localized}}\label{appendix:general}
We now prove Theorem \ref{thm:localized}, restated below for convenience.
\setcounter{theorem}{2}
\begin{theorem}[Localization of irreducible support components]
    Let $G$ be a simple graph, and let $H$ be a $G$-consistent Hermitian matrix with an eigenvector $\ket{\psi}$ corresponding to eigenvalue $\lambda$. Set $s=\abs{\supp(\ket{\psi})}$. Then for every vertex $v\in\supp(\ket{\psi})$, there exists an eigenvector $\ket{\psi_v}$ of $H$ with eigenvalue $\lambda$ such that $v\in\supp(\ket{\psi_v})$ and
    \begin{align}
        \operatorname{diam}_G\bigl(\supp(\ket{\psi_v})\bigr) \leq
        \begin{cases}
            2s-2, & \text{if } \ket{\psi} \text{ is non-extremal},\\[4pt]
            s-1, & \text{if } \ket{\psi} \text{ is extremal},
        \end{cases}\nonumber
    \end{align}
    where $\operatorname{diam}_G(T):=\max_{i,j\in T}\operatorname{dist}(i,j)$ denotes the graph diameter of $T$ in $G$.
\end{theorem}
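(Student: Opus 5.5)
Fix $\tau=1$ if $\ket{\psi}$ is extremal and $\tau=2$ otherwise, and let $R=\supp(\ket{\psi})$ with $|R|=s$. The plan is to take $\ket{\psi_v}=\Pi_T\ket{\psi}$, where $T$ is the irreducible support component of $\ket{\psi}$ containing $v$, and to show two things: that this vector is an eigenvector with eigenvalue $\lambda$, and that $\operatorname{diam}_G(T)\le \tau(|T|-1)\le \tau(s-1)$. This gives exactly $2s-2$ and $s-1$ in the two cases.

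First I will make the irreducible support components concrete. Define an auxiliary graph $G_\tau$ on vertex set $R$, joining $i,j\in R$ whenever $\operatorname{dist}_G(i,j)\le\tau$. I claim the irreducible support components are precisely the vertex sets of the connected components of $G_\tau$. In one direction, any partition of $R$ whose parts are pairwise at distance $>\tau$ cannot split an edge of $G_\tau$, so each part is a union of components. In the other direction, the partition into components of $G_\tau$ itself has all distinct parts at distance $>\tau$. Hence the component partition is the finest admissible one, and the finest partition is well defined.

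Next I show $\Pi_T\ket{\psi}$ is an eigenvector with eigenvalue $\lambda$ by applying Theorem~\ref{thm:degeneracy} to the partition $S=T$, $S'=R\setminus T$. If $S'$ is empty, then $\Pi_T\ket{\psi}=\ket{\psi}$ and there is nothing to do. Otherwise $\operatorname{dist}(T,R\setminus T)>\tau$, because $R\setminus T$ is a union of other components of $G_\tau$. Theorem~\ref{thm:degeneracy} then gives an eigenvector with eigenvalue $\lambda$ supported exactly on $T$. The general case of that proof shows it is $\Pi_T\ket{\psi}$; the extremal case also yields $\Pi_S\ket{\psi}$. In particular $v\in T=\supp(\ket{\psi_v})$.

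For the diameter bound, take $i,j\in T$ and a shortest path $i=u_0,u_1,\dots,u_r=j$ in $G_\tau$. Its vertices are distinct elements of $T$, so $r\le|T|-1$. Each hop satisfies $\operatorname{dist}_G(u_{t-1},u_t)\le\tau$, so the triangle inequality gives
\begin{align}
\operatorname{dist}_G(i,j)\le\tau r\le\tau(|T|-1)\le\tau(s-1).\nonumber
\end{align}

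The main subtlety is the clean identification of irreducible components with connected components of $G_\tau$; once that is in place, both the applicability of Theorem~\ref{thm:degeneracy} and the path-counting bound are routine. I would also check that the extremal label is handled correctly when the theorem is applied: $\ket{\psi}$ itself is extremal, which is all Theorem~\ref{thm:degeneracy} requires.
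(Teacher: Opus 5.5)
Your proposal is correct and follows essentially the paper's argument. The component of your auxiliary graph $G_\tau$ containing $v$ is exactly the paper's set $S=V(G'_1)\cap R$, where the paper takes $G'$ to be the subgraph on $N[R]$ with edges incident to $R$ when $\tau=2$, and $G[R]$ when $\tau=1$. Theorem~\ref{thm:degeneracy} is then applied in the same way. The only differences are in the diameter count and in rigor: your hop-wise triangle inequality in $G_\tau$ treats both cases uniformly and replaces the paper's count of alternating support and non-support vertices along a shortest path in $G'_1$ (this is the informal argument from the main text made precise), and you additionally justify that the finest partition is well defined, which the paper takes for granted.
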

\begin{proof}
    Non-extremal case. Let $ R  = \supp(\ket{\psi})$, pick any $v\in  R $, and assume $\ket{\psi}$ is non-extremal. As a preliminary step, define a subgraph $G'$ of $G$ with vertices $N[ R ]$ and edges which have at least one endpoint in $ R $. More formally,
    \begin{align}\label{eq:G-prime}
        G' = (N[ R ],E')\ \mathrm{where}\ E' = \big\{\{j,j'\}\in E(G)\ :\ j\in  R , j'\in N[ R ]\big\}.
    \end{align}
    Throughout this proof, we will use $V(G)$ and $E(G)$ to denote the set of vertices and edges of $G$, respectively. The vertex $v$ lies in some connected component of $G'$; call this connected component $G'_1$, and call the union of the remaining connected components $G'_2$. Then define the vertex sets
     \begin{align}\label{eq:vertex-sets}
         S = V(G'_1)\cap R \ \mathrm{and}\ S' = V(G'_2)\cap  R .
     \end{align}
     Observe that $V(G'_1) = N[S]$ and $V(G'_2) = N[S']$ due to the construction of $G'$. First assume that $S'$ is nonempty. Evidently, $N[S]\cap N[S']=\emptyset$, otherwise there would be a vertex common to both $G'_1$ to $G'_2$, which is impossible since these two subgraphs are disconnected. It follows that $ R $ can be partitioned into nonempty $S$ and $S'$ and that $\operatorname{dist}(S,S')>2$; in fact, $S$ is the irreducible support component of $\ket{\psi}$ which contains $v$. Theorem \ref{thm:degeneracy} then implies that there exists an eigenvector $\ket{\psi_v}$ of $H$ with support $S$ and eigenvalue $\lambda$. Moreover, since $v\in S$, $v$ belongs to $\supp(\ket{\psi_v})$, as desired. If instead $S'$ were empty, then we would have $S = R$, and we could choose $\ket{\psi_v}=\ket{\psi}$, which has support equal to $S$ and contains $v$ in its support.

     It remains to be shown that the distance in $G$ between any $i,j\in S$ is $\leq 2s-2$. We first show that the distance in $G'_1$ between any $i,j\in S$ must be $\leq 2s-2$. Choose any $i,j\in S$. Since $G'_1$ is connected and $S\subseteq V(G_1')$, there is a shortest path $P$ of finite length connecting $i$ to $j$. We can write $P$ as $(k_1,\ldots,k_{\ell})$ where $\ell-1$ is the length (number of edges) of $P$, $k_1=i$, $k_{\ell}=j$, and $k_2,\ldots, k_{\ell-1}\in N[S]$. Note that for any $t\in\{1,\ldots,\ell-1\}$, if $k_t \in N(S)$, then $k_{t+1}$ must be in $S$ since the edge $\{k_{t},k_{t+1}\}$ must have at least one vertex in $S$ (by the definition of $G'_1$). 
     It follows that $P$ cannot contain more vertices from $N(S)$ than from $S$. Furthermore, since the endpoints of $P$ (i.e., $i$ and $j$) belong to $S$, if $P$ contains $q$ vertices from $S$, then it can contain at most $q-1$ vertices from $N(S)$. Hence,  $\ell \leq q+(q-1) = 2q-1$, and the length of $P$ is at most $(2q-1)-1 =   2q-2$. Furthermore, since $P$ is the shortest path connecting $i$ to $j$, there cannot be any repeated vertices in $P$. It follows that $q\leq |S|\leq s$, so the length of $P$ is at most $2s-2$. Lastly, because the distance in $G'_1$ between $i$ and $j$ is $\leq 2s-2$ and $G'_1$ is a subgraph of $G$, the distance between these two vertices in $G$ is also $\leq 2s-2$, as desired.

     Extremal case. Now assume $\ket{\psi}$ is extremal, and the proof proceeds quite analogously. Instead of defining $G'$ as in Eq. (\ref{eq:G-prime}), we define $G'$ to be the induced subgraph of $ R $ in $G$, that is, the graph with vertex set $ R $ and whose edge set contains all edges in $E(G)$ with both endpoints in $ R $. As before, $v$ lies in some connected component of $G'$ called $G'_1$, and the union of the remaining connected components is called $G'_2$. Again define the vertex sets $S$ and $S'$ according to Eq. (\ref{eq:vertex-sets}). This time, however, it is also true that $S= V(G'_1)$ and $S'= V(G'_2)$ since $V(G'_1),V(G'_2)\subseteq R $. First assume that $S'$ is nonempty. Evidently, $S\cap N(S') = \emptyset$ and $S'\cap N(S) = \emptyset$, otherwise there would be an edge in $G'$ connecting a vertex in $S$ to one in $S'$, which is impossible since the two subgraphs $G'_1$ and $G'_2$ are disconnected. It follows that $R$ can be partitioned into nonempty $S,S'$ and that $\operatorname{dist}(S,S')>1$; again, $S$ is the irreducible support component of $\ket{\psi}$ containing $v$. Theorem \ref{thm:degeneracy} then implies that there exists an eigenvector $\ket{\psi_v}$ of $H$ with support $S$ and eigenvalue $\lambda$. Additionally, $v$ is contained in $\supp(\ket{\psi_v})$, as desired. If instead $S'$ were empty, then we would have $S = R$, and we could choose $\ket{\psi_v}=\ket{\psi}$, which has support equal to $S$ and contains $v$ in its support.
     
     As above, it remains to be shown that the distance in $G$ between any $i,j\in S$ is $\leq s-1$. We first show that the distance in $G'_1$ between any $i,j\in S$ must be $\leq s-1$. Choose any $i,j\in S$. Since $G'_1$ is connected and $S= V(G_1')$, there is a shortest path $P$ connecting $i$ to $j$ which contains only non-repeated vertices in $S$. Thus, the length of $P$ is at most $|S|-1 \leq s-1$, since $|S|\leq s$. Furthermore, since $G'_1$ is a subgraph of $G$, the distance between $i$ and $j$ in $G$ is also $\leq s-1$, as desired.
\end{proof}

\subsection{Proof of Theorem \ref{thm:approx}}\label{appendix:approx}
Next, we prove Theorem \ref{thm:approx}, again restated below.
\begin{theorem}[Approximately sparse support localization]
    Let $G$ be a simple graph, and let $H$ be a $G$-consistent Hermitian matrix with a normalized extremal eigenvector $\ket{\psi}$ corresponding to eigenvalue $\lambda$. Suppose $\ket{\psi}$ is $(\eta,\epsilon)$-sparse over the vertex set $ A \subseteq V$, and let $s = \abs{ A }$. Then, given any vertex $v\in  A $, there exists an exactly sparse normalized state $\ket{\psi_v}$ with $v\in\supp(\ket{\psi_v})$ and $\supp(\ket{\psi_v})\subseteq  A $ such that
    \begin{align}
        \operatorname{diam}_G\bigl(\supp(\ket{\psi_v})\bigr) \leq s-1\nonumber
    \end{align}
    and 
    \begin{align}
        \abs{\bra{\psi_v}H\ket{\psi_v}-\lambda} \leq \frac{2\norm{H}}{\eta}\sqrt{\epsilon} + \mathcal{O}(\epsilon).\nonumber
    \end{align}
\end{theorem}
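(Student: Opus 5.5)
The plan is to mimic the extremal case of Theorem~\ref{thm:localized}, applied to the truncation of $\ket{\psi}$ to the approximate support $A$ instead of to $\ket{\psi}$ itself. Without loss of generality I take $\ket{\psi}$ to be a normalized lowest eigenvector, so that $\lambda=\lambda_1$ and every normalized state $\ket{\phi}$ satisfies $\bra{\phi}H\ket{\phi}\geq\lambda$ (Appendix~\ref{appendix:rayleigh}).

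\textbf{Step 1: truncation and decomposition.} Write $\ket{\psi}=\ket{\phi}+\ket{\chi}$, where $\ket{\phi}=\Pi_A\ket{\psi}$ and $\ket{\chi}=\Pi_{V\setminus A}\ket{\psi}$, so that $\norm{\chi}^2\leq\epsilon$. Let $G'$ be the subgraph of $G$ induced by $A$. Let $S$ be the vertex set of the connected component of $G'$ containing $v$, and let $S'=A\setminus S$. Set $\ket{\phi_S}=\Pi_S\ket{\psi}$, $\ket{\phi_{S'}}=\Pi_{S'}\ket{\psi}$, and take the candidate
\begin{align}
\ket{\psi_v}=\ket{\phi_S}/\norm{\phi_S}.\nonumber
\end{align}

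\textbf{Step 2: support and diameter.} We may assume $\eta>0$; otherwise the energy bound is vacuous. Then every $j\in A$ has $\abs{\psi_j}^2\geq\eta>0$, so $\supp(\ket{\psi_v})=S$ exactly. Hence $v\in\supp(\ket{\psi_v})$ and $\supp(\ket{\psi_v})\subseteq A$. Because $S$ is connected in the induced subgraph, the same shortest-path argument as in the extremal case of Theorem~\ref{thm:localized} bounds $\operatorname{diam}_G(S)$ by $\abs{S}-1\leq s-1$. Moreover $\norm{\phi_S}^2\geq\abs{\psi_v}^2\geq\eta$, which is where $\eta$ enters the energy bound.

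\textbf{Step 3: energy of the truncation.} Using $H\ket{\psi}=\lambda\ket{\psi}$,
\begin{align}
\bra{\phi}H\ket{\phi}=\lambda-2\operatorname{Re}\bra{\chi}H\ket{\phi}-\bra{\chi}H\ket{\chi}.\nonumber
\end{align}
The cross term is the part to handle carefully. The crude route uses Cauchy--Schwarz, $\abs{\bra{\chi}H\ket{\phi}}\leq\norm{H}\sqrt{\epsilon}$, which gives the stated leading term $2\norm{H}\sqrt{\epsilon}$. A sharper route uses the eigen-equation: $\bra{\chi}H\ket{\phi}=\bra{\chi}H(\ket{\psi}-\ket{\chi})=\lambda\norm{\chi}^2-\bra{\chi}H\ket{\chi}=\mathcal{O}(\norm{H}\epsilon)$. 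Either way,
\begin{align}
\bra{\phi}H\ket{\phi}-\lambda\norm{\phi}^2\leq 2\norm{H}\sqrt{\epsilon}+\mathcal{O}(\epsilon),\nonumber
\end{align}
where the $\lambda\norm{\chi}^2$ correction coming from $\norm{\phi}^2=1-\norm{\chi}^2$ is absorbed into $\mathcal{O}(\epsilon)$.

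\textbf{Step 4: splitting across the components.} There are no edges between $S$ and $S'$, so $\bra{\phi_{S'}}H\ket{\phi_S}=0$ and
\begin{align}
\bra{\phi}H\ket{\phi}=\bra{\phi_S}H\ket{\phi_S}+\bra{\phi_{S'}}H\ket{\phi_{S'}}.\nonumber
\end{align}
This is the same identity used in the extremal case of Theorem~\ref{thm:degeneracy}. Rayleigh--Ritz gives $\bra{\phi_{S'}}H\ket{\phi_{S'}}\geq\lambda\norm{\phi_{S'}}^2$, so
\begin{align}
\norm{\phi_S}^2\bigl(\bra{\psi_v}H\ket{\psi_v}-\lambda\bigr)\leq\bra{\phi}H\ket{\phi}-\lambda\norm{\phi}^2.\nonumber
\end{align}
Dividing by $\norm{\phi_S}^2\geq\eta$, and using $\bra{\psi_v}H\ket{\psi_v}\geq\lambda$ for the absolute value, yields
\begin{align}
\abs{\bra{\psi_v}H\ket{\psi_v}-\lambda}\leq\frac{2\norm{H}}{\eta}\sqrt{\epsilon}+\mathcal{O}(\epsilon).\nonumber
\end{align}

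\textbf{Main obstacle.} I expect the main work to lie in Steps 3 and 4. One must keep track of normalization, since $\norm{\phi}^2\neq 1$ and it is cleanest to compare unnormalized quadratic forms against $\lambda\norm{\cdot}^2$. One must also handle the cross term between $\ket{\chi}$ and $\ket{\phi}$ so that the $\sqrt{\epsilon}$ leading term emerges. The key structural point is that dropping $\ket{\phi_{S'}}$ is safe only because its Rayleigh quotient is at least $\lambda$. This holds only in the extremal setting, which is why the theorem assumes $\ket{\psi}$ is extremal.
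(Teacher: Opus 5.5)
Your proof is correct, but it takes a different route from the paper's. The paper normalizes the truncation $\ket{\phi}$ and forms the reflected state $\ket{\phi'}\propto\alpha\ket{\psi_S}-\beta\ket{\psi_{S'}}$. It uses the absence of $S$--$S'$ edges to get $\bra{\phi}H\ket{\phi}=\bra{\phi'}H\ket{\phi'}$, and writes $\ket{\psi_S}\propto\ket{\phi}+\ket{\phi'}$. It then bounds all four resulting terms, including the cross term $\bra{\phi'}(H-\lambda)\ket{\phi}$, by $\norm{(H-\lambda)\ket{\phi}}\leq 2\norm{H}\norm{\ket{\phi}-\ket{\psi}}$ via Cauchy--Schwarz. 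You instead split the unnormalized shifted form additively over $S$ and $S'$, and discard the $S'$ piece using positivity of $H-\lambda$.

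The paper's route never uses extremality, so the same $\mathcal{O}(\sqrt{\epsilon}/\eta)$ energy bound in fact holds for non-extremal eigenvectors. Your closing remark, that dropping $\ket{\phi_{S'}}$ is why the theorem needs extremality, therefore describes your method, not the result.

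Your route is shorter, and it can give more than you claim. Carry your ``sharper route'' to the end rather than stopping at ``either way.'' Since $(H-\lambda)\ket{\psi}=0$, one has exactly $\bra{\phi}(H-\lambda)\ket{\phi}=\bra{\chi}(H-\lambda)\ket{\chi}\in[0,2\norm{H}\epsilon]$. Hence $0\leq\bra{\psi_v}H\ket{\psi_v}-\lambda\leq 2\norm{H}\epsilon/\eta$. This is quadratically better, and it implies the stated bound outright (as $\epsilon\leq\sqrt{\epsilon}$) with no asymptotic correction term. The $\sqrt{\epsilon}$ in the paper comes from Cauchy--Schwarz being linear in $\norm{\ket{\phi}-\ket{\psi}}$, and positivity is exactly what removes it. For non-extremal eigenvectors, by contrast, the $S$ and $S'$ pieces can carry cancelling energy shifts of order $\sqrt{\epsilon}$.
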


\begin{proof}
    Pick any $v\in A $. We define a set $S$ analogously to the $S$ constructed in the proof of the extremal case of Theorem \ref{thm:localized}: letting $G'$ be the induced subgraph of $ A $ in $G$, define $S$ to be the vertex set of the connected component of $G'$ that contains $v$. Parallel to this earlier proof, it is straightforward to show that the diameter of $S$ in $G$ is $\leq|S|-1$ and therefore $\leq s-1$, since $|S|\leq s$. Define $S'= A \setminus S$.
    Then $\ket{\psi}$ can be written as the linear combination of three normalized vectors $\ket{\psi_S}$, $\ket{\psi_{S'}}$, and $\ket{\psi^\perp}$ supported on $S$, $S'$, and $V\setminus  A $, respectively:
    \begin{align}
        \ket{\psi} = \alpha\ket{\psi_S}+\beta\ket{\psi_{S'}} + \gamma\ket{\psi^\perp}, 
    \end{align}
    where $\alpha$, $\beta$, and $\gamma$ are some scalars whose square magnitudes sum to 1 (without loss of generality, choose $\alpha$ to be nonnegative). Since the diameter of $S$ in $G$ is $\leq s-1$, the choice $\ket{\psi_v}=\ket{\psi_S}$ satisfies Eq. (\ref{eq:approx-cluster}). We will now prove that $\ket{\psi_v}=\ket{\psi_S}$ also satisfies Eq. (\ref{eq:approx-energy}) using the following strategy:
    \begin{enumerate}
        \item Define the normalized state 
        \begin{align}
            \ket{\phi} = \frac{1}{\sqrt{|\alpha|^2 + |\beta|^2}} (\alpha\ket{\psi_S}+\beta\ket{\psi_{S'}}). 
        \end{align}
        Since $\ket{\phi}$ and the true eigenvector $\ket{\psi}$ have large overlap, the expectation values of their energies are close. 
        \item Now define the normalized state
        \begin{align}
            \ket{\phi'} = \frac{1}{\sqrt{|\alpha|^2 + |\beta|^2}} (\alpha\ket{\psi_S}-\beta\ket{\psi_{S'}}).
        \end{align}
        By the definition of $S$, the subsets $S$ and $S'$ must belong to different connected components of the induced subgraph of $G$ on $ A $. Hence, on the graph $G$, none of the vertices in $S$ are neighbors of those in $S'$, and vice versa. This fact is sufficient to guarantee that $\ket{\phi'}$ and $\ket{\phi}$ have identical expectation values of energy.
        \item Observe that $\ket{\psi_S}$ is an equal superposition of $\ket{\phi}$ and $\ket{\phi'}$. Because of the condition that $\abs{\psi_j}^2\geq \eta$ for all $j\in  A $, the magnitude of $\alpha$ (i.e., the coefficient of $\ket{\psi_S}$ projected onto each of $\ket{\phi}$ and $\ket{\phi'}$) cannot be too small. Together with the results of steps 1 and 2, this fact implies that the expectation value of energy $\ket{\psi_S}$ is close to $\lambda$, thereby implying Eq. (\ref{eq:approx-energy}).
    \end{enumerate}
    Figure~\ref{fig:thm4proof} illustrates the graph decomposition (panel a) and state-space geometry (panel b) underlying these three steps.
    We now fill in the details of each of these steps sequentially, beginning with step 1.  
    Note that
    \begin{align}
        \abs{\bra{\phi}H\ket{\phi}-\lambda} &=  \abs{\bra{\phi}(H-\lambda)\ket{\phi}}.
    \end{align}
    Application of the Cauchy-Schwarz inequality above gives
    \begin{align}
        \abs{\bra{\phi}H\ket{\phi}-\lambda} &\leq \norm{\ket{\phi}}\norm{(H-\lambda)\ket{\phi}}\\
        &= \norm{(H-\lambda)\ket{\phi}}
    \end{align}
    where $\norm{\cdot}$ represents the 2-norm for vector arguments. Observe that since $\ket{\psi}$ is an eigenvector with eigenvalue $\lambda$, $(H-\lambda)\ket{\psi} = 0$. It thus follows that 
    \begin{align}
         \abs{\bra{\phi}H\ket{\phi}-\lambda}&\leq \norm{(H-\lambda)(\ket{\phi}-\ket{\psi})}\\
         &\leq \norm{H-\lambda}\norm{\ket{\phi}-\ket{\psi}}\\
         &\leq 2\norm{H}\norm{\ket{\phi}-\ket{\psi}},
    \end{align}
    where $\norm{\cdot}$ represents the operator norm for matrix arguments. Hence if $\ket{\psi}$ and $\ket{\phi}$ are close, their expectation energies must be close too. We now explicitly verify that the vectors are indeed close:
    \begin{align}
        \norm{\ket{\psi}-\ket{\phi}}^2 &= \left(|\alpha|^2+|\beta|^2\right)\left(1-\frac{1}{\sqrt{|\alpha|^2+|\beta|^2}}\right)^2 + |\gamma|^2\\
        &= \left(|\alpha|^2+|\beta|^2\right)\left(1-\frac{2}{\sqrt{|\alpha|^2+|\beta|^2}}+\frac{1}{|\alpha|^2+|\beta|^2}\right) + |\gamma|^2.
    \end{align}
    Since $|\alpha|^2+|\beta|^2 = 1-|\gamma|^2$, the above expression simplifies as
    \begin{align}
        \norm{\ket{\psi}-\ket{\phi}}^2 &= 2-2\sqrt{1-|\gamma|^2}.
    \end{align}
    Hence, we have completed step 1, obtaining the result
    \begin{align}
        \abs{\bra{\phi}H\ket{\phi}-\lambda} &\leq 2\norm{H} \sqrt{2-2\sqrt{1-|\gamma|^2}}\\
        &= 2\norm{H}|\gamma| + \mathcal{O}(|\gamma|^2).
    \end{align}

    Next, we proceed to step 2. Note that the application of $H$ to $\ket{\psi_S}$ yields a vector whose support is contained in $N[S] = S \cup N(S)$. Let $\Pi_S = \sum_{j\in S}\ketbra{j}$ and $\Pi_{N(S)}=\sum_{j\in N(S)}\ketbra{j}$ be the projectors onto the spans of $S$ and $N(S)$, respectively. Then 
    \begin{align}
        H\ket{\psi_S} = \left(\Pi_S+\Pi_{N(S)}\right) H \ket{\psi_S}.
    \end{align}
    Analogously, 
    \begin{align}
        H\ket{\psi_{S'}} = \left(\Pi_{S'}+\Pi_{N(S')}\right) H \ket{\psi_{S'}}.
    \end{align}
    The energy expectation for $\ket{\phi}$ can then be computed accordingly:
    \begin{align}
        \bra{\phi}H\ket{\phi} &= \frac{1}{|\alpha|^2+|\beta|^2} \left(\alpha^*\bra{\psi_S}+\beta^*\bra{\psi_{S'}}\right)\bigg(\alpha \left(\Pi_S+\Pi_{N(S)}\right) H \ket{\psi_S} \nonumber\\
        &\qquad + \beta \left(\Pi_{S'}+\Pi_{N(S')}\right) H \ket{\psi_{S'}}\bigg)\\
        &= \frac{1}{|\alpha|^2+|\beta|^2} \left(|\alpha|^2\bra{\psi_S}H\ket{\psi_S} + |\beta|^2\bra{\psi_{S'}}H\ket{\psi_{S'}}\right),\label{eq:e-expectation}
    \end{align}
    where the second equality follows since 
    \begin{align}
        0&=\bra{\psi_S}\Pi_{N(S)} = \bra{\psi_{S'}}\Pi_{N(S)} = \bra{\psi_S}\Pi_{N(S')} = \bra{\psi_{S'}}\Pi_{N(S')}=\bra{\psi_{S'}}\Pi_S = \bra{\psi_S}\Pi_{S'}
    \end{align}
    and $\bra{\psi_S}\Pi_S = \bra{\psi_S}$ and $\bra{\psi_{S'}}\Pi_{S'}=\bra{\psi_{S'}}$. Analogously computing the energy expectation for $\ket{\phi'}$ reveals that $\bra{\phi'}H\ket{\phi'}$ also equals the quantity in Eq. (\ref{eq:e-expectation}). Hence, $\bra{\phi}H\ket{\phi} = \bra{\phi'}H\ket{\phi'}$.

    Lastly, we address step 3. Namely, we show that $\bra{\psi_S}H\ket{\psi_S}$ is close to $\lambda$. Note that $\ket{\psi_S}$ can be written as the following linear combination of $\ket{\phi}$ and $\ket{\phi'}$:
    \begin{align}
        \ket{\psi_S} = \mathcal{N}(\ket{\phi}+\ket{\phi'}),\ \mathrm{where}\ \mathcal{N} = \frac{\sqrt{|\alpha|^2+|\beta|^2}}{2|\alpha|}.
    \end{align}
    Then the difference between $\lambda$ and the expectation energy of $\ket{\psi_S}$ is given by
    \begin{align}
        \abs{\bra{\psi_S}H\ket{\psi_S}- \lambda} &= \abs{\bra{\psi_S}(H-\lambda)\ket{\psi_S}} \\
        &= \mathcal{N}^2 \Big|\big(\bra{\phi}(H-\lambda)\ket{\phi} + \bra{\phi'}(H-\lambda)\ket{\phi'} + \bra{\phi}(H-\lambda)\ket{\phi'}\\
        &\qquad+\bra{\phi'}(H-\lambda)\ket{\phi}\big)\Big|,
    \end{align}
    and it follows from the triangle inequality that
    \begin{align} \label{eq:triangle}
        \abs{\bra{\psi_S}H\ket{\psi_S}-\lambda} \leq \mathcal{N}^2 \big(|\bra{\phi}(H-\lambda)\ket{\phi}| + |\bra{\phi'}(H-\lambda)\ket{\phi'}| + 2|\bra{\phi'}(H-\lambda)\ket{\phi}|\big).
    \end{align}
    Steps 1 and 2 have already yielded upper bounds for the first two terms in this sum. We must now upper bound the remaining cross term. 
    Applying the Cauchy-Schwarz inequality to the cross term, we have
    \begin{align}
        \abs{\bra{\phi'}(H-\lambda)\ket{\phi}} &\leq \norm{\ket{\phi'}}\norm{(H-\lambda)\ket{\phi}}\\
        &= \norm{(H-\lambda)\ket{\phi}}.
    \end{align}
    Recall that in step 1, we already showed that $\norm{(H-\lambda)\ket{\phi}} \leq 2\norm{H}\sqrt{2-2\sqrt{1-|\gamma|^2}}$. Hence 
    \begin{align}
        \abs{\bra{\phi'}(H-\lambda)\ket{\phi}} &\leq  2\norm{H}\sqrt{2-2\sqrt{1-|\gamma|^2}}.
    \end{align}
    We now substitute this cross term into Eq. (\ref{eq:triangle}):
    \begin{align}
        \abs{\bra{\psi_S}H\ket{\psi_S}-\lambda} &\leq 8\mathcal{N}^2 \norm{H}\sqrt{2-2\sqrt{1-|\gamma|^2}}\\
        &= 2\frac{|\alpha|^2+|\beta|^2}{|\alpha|^2} \norm{H}\sqrt{2-2\sqrt{1-|\gamma|^2}}\\
        &= \frac{2 \norm{H}}{|\alpha|^2} (1-|\gamma|^2)\sqrt{2-2\sqrt{1-|\gamma|^2}},\label{eq:upper-bound-gamma}
    \end{align}
    where in the last equality we have used the fact that $|\alpha|^2+|\beta|^2 = 1-|\gamma|^2$. Now observe that $|\gamma|^2 \leq \epsilon$ and $|\alpha|^2 \geq \eta$. The quantity in Eq. (\ref{eq:upper-bound-gamma}) is monotonically increasing in $\abs{\gamma}^2$ over the interval $[0,\frac{9}{25})$, reaching a maximum at the latter endpoint. Since $\epsilon$ lies in this interval by assumption, we conclude that 
    \begin{align}
        \abs{\bra{\psi_S}H\ket{\psi_S}-\lambda} &\leq \frac{2 \norm{H}}{\eta} (1-\epsilon)\sqrt{2-2\sqrt{1-\epsilon}}.
    \end{align}
    Eq. (\ref{eq:approx-energy}) is recovered by choosing $\ket{\psi_v} = \ket{\psi_S}$ and keeping the leading-order term in $\epsilon$.
\end{proof}


\begin{figure}[htbp]
\centering
\resizebox{\columnwidth}{!}{%
\begin{tikzpicture}[
  every node/.style={circle, draw=black, fill=white, inner sep=0pt,
                     minimum size=17pt, line width=0.6pt},
  vnode/.style  = {fill=black!80},
  supp/.style   = {fill=black!75},
  suppB/.style  = {fill=black!45},
  leaked/.style = {draw=black!35, fill=white},
  lbl/.style    = {draw=none, fill=none, rectangle,
                   font=\small, inner sep=2pt},
  lbl2/.style   = {draw=none, fill=none, rectangle,
                   font=\footnotesize, inner sep=2pt},
  lbl3/.style   = {draw=none, fill=none, rectangle,
                   font=\scriptsize, inner sep=1.5pt},
  sboxA/.style  = {draw=blue!55!black,   fill=blue!9,    rectangle,
                   rounded corners=2.5pt, font=\footnotesize,
                   inner sep=4pt, align=left, line width=0.55pt},
  sboxB/.style  = {draw=green!52!black,  fill=green!10,  rectangle,
                   rounded corners=2.5pt, font=\footnotesize,
                   inner sep=4pt, align=left, line width=0.55pt},
  sboxC/.style  = {draw=orange!65!black, fill=orange!12, rectangle,
                   rounded corners=2.5pt, font=\footnotesize,
                   inner sep=4pt, align=left, line width=0.55pt},
  ldrA/.style = {draw=blue!50!black,   line width=0.55pt, dashed},
  ldrB/.style = {draw=green!50!black,  line width=0.55pt, dashed},
  ldrC/.style = {draw=orange!60!black, line width=0.55pt, dashed},
  aedge/.style = {draw=black!35, line width=0.5pt},
]

\begin{scope}

\draw[draw=black!55, dashed, line width=0.8pt, fill=black!5, rounded corners=28pt]
  (-1.5, 0.45) rectangle (4.9, 4.4);

\draw[draw=black!65, line width=0.7pt, fill=black!11, rounded corners=16pt]
  (-1.2, 0.95) rectangle (1.8, 3.95);
\draw[draw=black!65, line width=0.7pt, fill=black!11, rounded corners=16pt]
  (2.4, 1.10) rectangle (4.6, 3.80);

\node[leaked] (lk1) at (-2.4, 2.8) {};
\node[leaked] (lk2) at (-2.4, 1.7) {};
\node[leaked] (lk3) at ( 5.7, 2.5) {};

\draw[aedge] (-0.6, 2.7) -- (lk1);
\draw[aedge] (-0.6, 1.6) -- (lk2);
\draw[aedge] (4.3,  3.0) -- (lk3);   

\node[supp]  (s1) at (-0.3, 2.7) {};
\node[vnode] (v)  at ( 0.7, 2.2) {};
\node[supp]  (s2) at (-0.3, 1.6) {};
\node[supp]  (s3) at ( 0.9, 3.2) {};
\draw (s1)--(v); \draw (s2)--(v); \draw (s1)--(s2); \draw (s3)--(v); \draw (s3)--(s1);

\node[suppB] (t1) at (2.9, 3.0) {};
\node[suppB] (t2) at (4.1, 3.0) {};
\node[suppB] (t3) at (3.5, 1.8) {};
\draw (t1)--(t2); \draw (t2)--(t3); \draw (t1)--(t3);

\node[lbl2] at ( 0.3, 3.62) {$S$};
\node[lbl2] at ( 3.55, 3.62) {$S'$};
\node[lbl2, text=black!55] at (-2.42, 3.40) {$V\!\setminus\!A$};
\node[lbl3,text=white] at ( 0.7, 2.2) {$v$};

\node[draw=none, fill=none, rectangle, font=\scriptsize, text=black!38,
      inner sep=1pt] at (1.70, 0.66) {$A$ (approx.\ support)};

\node[lbl2, anchor=north] at (1.70, 0.25)
  {$\ket{\psi}=\alpha\ket{\psi_S}+\beta\ket{\psi_{S'}}+\gamma\ket{\psi^\perp},
    \quad |\gamma|^2\leq\epsilon$};

\node[lbl, anchor=south] at (1.70, 4.68)
  {\textbf{(a)}\;graph decomposition of approximate support $A$};

\end{scope}

\begin{scope}[xshift=7.0cm, yshift=1.86cm]

\draw[draw=black!18, line width=0.4pt] (-0.40, 0) -- (3.05, 0);
\draw[draw=black!18, line width=0.4pt] (0, -2.65) -- (0, 2.70);
\node[lbl2, anchor=south west, text=black!55] at (0.01, 2.20)
  {$\ket{\psi_{S'}}$};


\draw[->, line width=1.3pt, draw=blue!65!black] (0,0) -- (1.84, 1.54);
\node[lbl2, anchor=north west, text=blue!72!black] at (1.60, 1.35)
  {$\ket{\phi}$};

\draw[->, line width=1.3pt, draw=orange!75!black] (0,0) -- (1.84, -1.54);
\node[lbl2, anchor=north, text=orange!82!black] at (1.20, -1.20)
  {$\ket{\phi'}$};

\draw[->, line width=1.8pt, draw=green!52!black] (0,0) -- (3.00, 0);
\node[lbl2, anchor=north, text=green!52!black] at (3.00, -0.12)
  {$\ket{\psi_S}$};

\draw[->, line width=0.9pt, draw=black!42, dashed] (0,0) -- (1.00, 1.73);
\node[lbl3, anchor=south west, text=black!50] at (1.02, 1.76)
  {$\ket{\psi}$\;$(|\gamma|^2\!\leq\!\epsilon)$};

\draw[draw=black!42, line width=0.5pt] (0.65,0) arc(0: 40:0.65cm);
\node[lbl3, text=black!58] at (0.88, 0.22) {$\theta$};
\draw[draw=black!42, line width=0.5pt] (0.65,0) arc(0:-40:0.65cm);
\node[lbl3, text=black!58] at (0.88,-0.22) {$\theta$};


\node[sboxA, anchor=west] (box1) at (3.90, 1.54)
  {\tikz[baseline=(n.base),every node/.style={}]
     \node[circle,draw=blue!55!black,inner sep=1.1pt,
           font=\footnotesize\bfseries,line width=0.55pt](n){1};%
   ~Cauchy-Schwarz:\\
   $\ket{\phi}\approx\ket{\psi}$\;$(|\gamma|^2\leq\epsilon)$\\
   $\Rightarrow|\bra{\phi}H\ket{\phi}-\lambda|\leq 2\|H\||\gamma|$};

\node[sboxB, anchor=west] (box3) at (3.90, 0.00)
  {\tikz[baseline=(n.base),every node/.style={}]
     \node[circle,draw=green!52!black,inner sep=1.1pt,
           font=\footnotesize\bfseries,line width=0.55pt](n){3};%
   ~$\ket{\psi_S}=\mathcal{N}(\ket{\phi}+\ket{\phi'})$\\
   $\Rightarrow|\bra{\psi_S}H\ket{\psi_S}-\lambda|\leq
     \tfrac{2\|H\|}{\eta}\sqrt{\epsilon}+\mathcal{O}(\epsilon)$};

\node[sboxC, anchor=west] (box2) at (3.90, -1.54)
  {\tikz[baseline=(n.base),every node/.style={}]
     \node[circle,draw=orange!65!black,inner sep=1.1pt,
           font=\footnotesize\bfseries,line width=0.55pt](n){2};%
   ~$S\!\perp\!S'$ in $G[A]$ (no cross edges):\\
   $\Rightarrow\bra{\phi}H\ket{\phi}=\bra{\phi'}H\ket{\phi'}$};

\draw[ldrA] (box1.west) -- (1.86, 1.54);   
\draw[ldrB] (box3.west) -- (3.04, 0.00);   
\draw[ldrC] (box2.west) -- (1.86, -1.54);  

\node[lbl, anchor=south west] at (0.20, 2.82)
  {\textbf{(b)}\;state-space decomposition};

\end{scope}

\end{tikzpicture}%
}

\caption{Proof sketch of Theorem~\ref{thm:approx} (approximately sparse support localization).
\textbf{(a)}~The extremal eigenvector $\ket{\psi}$ with eigenvalue $\lambda$ is approximately supported on the vertex subset $A$. Within the induced subgraph of $A$ in $G$, the guiding vertex $v$ belongs to a connected component whose vertex set is $S$; the remaining vertices in $A$ belong to $S'$. Vertices outside $A$ carry the leakage amplitude $\gamma$, with $|\gamma|^2\leq\epsilon$. The localized states $\ket{\psi_S}$ and $\ket{\psi_{S'}}$ are the projections of $\ket{\psi}$ to $S$ and $S'$, respectively. 
\textbf{(b)}~State-space picture (vertical axis: $\ket{\psi_{S'}}$; horizontal: $\ket{\psi_S}$).
\textcircled{\raisebox{-0.8pt}{\footnotesize 1}}~(blue) $\ket{\phi}$ is the normalized projection
of $\ket{\psi}$ to the approximate support $A$; Cauchy-Schwarz bounds how far its energy deviates from $\lambda$.
\textcircled{\raisebox{-0.8pt}{\footnotesize 2}}~(orange) Reflecting $\ket{\phi}$ across $\ket{\psi_S}$ gives $\ket{\phi'}$;
disconnection of $S$ and $S'$ in the induced subgraph forces $\ket{\phi'}$ to have the same energy
expectation as $\ket{\phi}$.
\textcircled{\raisebox{-0.8pt}{\footnotesize 3}}~(green) The localized state $\ket{\psi_S}\propto\ket{\phi}+\ket{\phi'}$
lies along the bisector; combining
\textcircled{\raisebox{-0.8pt}{\footnotesize 1}} and~\textcircled{\raisebox{-0.8pt}{\footnotesize 2}} bounds its energy
deviation from $\lambda$ by $(2\|H\|/\eta)\sqrt{\epsilon}+\mathcal{O}(\epsilon)$.}
\label{fig:thm4proof}
\end{figure}

Recall that for the bound in Eq. (\ref{eq:approx-energy}) of Theorem~\ref{thm:approx} to not diverge as $\eta,\epsilon\rightarrow 0$, it is necessary for $\eta$ to scale as $\Omega(\sqrt{\epsilon})$. This assumption is not always physically realistic, but an alternative mild assumption allows for the $\eta$ dependence to be removed entirely in some cases. Namely, 
suppose that $\abs{\alpha}> \abs{\beta}$; in other words, after partitioning the approximate support $ A $ into subsets $S$ and $S'$, more of the weight of $\ket{\psi}$ lies in $S$---the part which contains $v$---than in the complement $S'$. This assumption is reasonable when $v$ is given as a guiding vertex that is promised to have large weight in $\ket{\psi}$.

We now demonstrate how this assumption removes the dependence on $\eta$. Note that $\abs{\alpha}^2+\abs{\beta}^2 = 1-\abs{\gamma}^2 \geq 1-\epsilon$. Hence, $\abs{\beta}^2 \geq 1-\epsilon - \abs{\alpha}^2$. Since $\abs{\alpha}^2>\abs{\beta}^2$ by assumption, we have $\abs{\alpha}^2 > 1-\epsilon -\abs{\alpha}^2$. It follows that
\begin{align}\label{eq:alph-lower-bound}
    \abs{\alpha}^2 > (1-\epsilon)/2.
\end{align}
Substituting this inequality and $\abs{\gamma}^2\leq \epsilon$ into Eq. (\ref{eq:upper-bound-gamma}), we obtain
\begin{align}
    \abs{\bra{\psi_S}H\ket{\psi_S}-\lambda} &\leq 4 \norm{H}\sqrt{2-2\sqrt{1-\epsilon}}.
\end{align}
Keeping the leading-order term in $\epsilon$, we recover
\begin{align}
    \abs{\bra{\psi_S}H\ket{\psi_S}-\lambda} \leq 4\norm{H}\sqrt{\epsilon} + \mathcal{O}(\epsilon).
\end{align}
Therefore, if the assumption $\abs{\alpha}>\abs{\beta}$ is added to Theorem~\ref{thm:approx}, Eq.~(\ref{eq:approx-energy}) can be replaced with 
\begin{align}
    \abs{\bra{\psi_v}H\ket{\psi_v}-\lambda} \leq 4\norm{H}\sqrt{\epsilon} + \mathcal{O}(\epsilon),
\end{align}
which has no explicit $\eta$ dependence.

\subsection{Proof of Theorem \ref{thm:alg}}\label{appendix:alg}
We prove Theorem~\ref{thm:alg} below.
\begin{proof}[Proof of Theorem~\ref{thm:alg}]
    Algorithm~\ref{algorithm:eigenwalk} below solves the sparse guided eigenwalk problem. Recall that due to Theorem~\ref{thm:localized}, there must exist a solution eigenvector whose support is localized near the guiding vertex. The algorithm collects all vertices within a sufficiently large radius of the guiding vertex such that the support of this eigenvector is guaranteed to be contained within the collected vertices. Then, the Hamiltonian is projected to the span of these collected vertices and exactly diagonalized. In the extremal case, the solution can be straightforwardly identified as an extremal eigenvector of the projected Hamiltonian (due to the Rayleigh--Ritz theorem). 

    However, identifying the solution is more subtle in the non-extremal case. First, it is necessary to determine which eigenvectors of the projected Hamiltonian are also eigenvectors of the full Hamiltonian. To do so, the energy variance (with respect to the full Hamiltonian) is computed for every eigenvector of the projected Hamiltonian; those vectors with zero variance must then also be eigenvectors of the full Hamiltonian. Group these eigenvectors of the full Hamiltonian into eigenspaces according to their eigenvalues. Then, the solution can be identified by leveraging the sparsity-separation promise of the non-extremal setting: since this promise guarantees that any $s$-sparse eigenvector of the full Hamiltonian must be a solution, it is sufficient to check if any of these eigenspaces contains an $s$-sparse vector. 
    \begin{algorithm}[H]
    \caption{Sparse guided eigenwalk}\label{algorithm:eigenwalk}
        \begin{algorithmic}[1]
            \Require Simple graph $G=(V,E)$ of maximum degree $d$, a positive integer $2\leq s\leq |V|$ (eigenvector sparsity), positive integers $m\leq |V|$ (total number of eigenvalues) and $k\leq m$ (eigenvalue index), sparse row access to a $G$-consistent Hermitian matrix $H$ with distinct eigenvalues $\lambda_1,\ldots,\lambda_m$ promised to have an eigenvector $\ket{\psi}$ with eigenvalue $\lambda_k$ and $\abs{\supp(\ket{\psi})}\leq s$, and a guiding vertex $v$ promised to belong to $\supp(\ket{\psi})$. If $k\notin\{1,m\}$, assume that the support of every eigenvector with eigenvalue $\neq\lambda_k$ has size $>s$.
            \Ensure The support of an eigenvector $\ket{\psi'}$ of $H$ with eigenvalue $\lambda_k$ that satisfies $\abs{\supp{(\ket{\psi'})}} \leq s$, in time $\mathcal{O}(d^{3s-3})$ (extremal case) or $\mathcal{O}(d^{(2s-2)(s+3)})$ (non-extremal case).
            \State Define $r = s-1$ if $k\in\{1,m\}$ and $r=2s-2$ if $k\notin\{1,m\}$.
            \State Compute the ball $B=\{u\in V:\operatorname{dist}(u,v)\leq r\}$ via breadth-first search starting at $v$ and truncating at depth $r$. 
            \State Determine $\Pi_BH\Pi_B$, where $\Pi_B=\sum_{j\in B}\ketbra{j}$ is the projector onto the span of $B$.
            \State Perform exact diagonalization on $\Pi_BH\Pi_B$ restricted to $\operatorname{Span}\{\ket{j}:j\in B\}$, obtaining all of its eigenvalues and eigenvectors.
            \If{$k\in\{1,m\}$,}
                \State If $k=1$ ($k=m$), choose any lowest (highest) eigenvector $\ket{\psi'}$ of $\Pi_BH\Pi_B$, and return $\supp(\ket{\psi'})$.
            \Else 
                \For{each eigenvalue $\mu$ of $\Pi_BH\Pi_B$,}
                    \State Initialize $R_\mu\leftarrow \{\}$, set of eigenvectors of the full $H$ with eigenvalue $\mu$.
                    \For{each normalized eigenvector $\ket{\phi}$ of $\Pi_BH\Pi_B$ with eigenvalue $\mu$}
                        \State Compute $\operatorname{Var}(\ket{\phi}) = \bra{\phi}H^2\ket{\phi}-\bra{\phi}H\ket{\phi}^2$.
                        \If{$\operatorname{Var}(\ket{\phi}) = 0$}
                            \State $R_\mu\leftarrow R_\mu \cup \{\ket{\phi}\}$
                        \EndIf
                    \EndFor
                    \State Using Subroutine~\ref{subroutine:sparse}, determine whether the span of $R_\mu$ contains an $s$-sparse vector, and return the support of this vector if True.  
                \EndFor
            \EndIf 
        \end{algorithmic}
    \end{algorithm}

    Algorithm \ref{algorithm:eigenwalk} utilizes the subroutine below to determine whether an $s$-sparse vector is contained within the span of a given set of vectors. Let $V$ be the matrix whose columns are the given vectors. Suppose such an $s$-sparse vector $\ket{\phi}$ exists. Since $\ket{\phi}$ is contained in the column span of $V$, it can be written as $\ket{\phi} = Vw$ for some column vector $w$. Furthermore, because $\ket{\phi}$ is $s$-sparse, there is a size-$s$ subset $S$ of indices which contains $\supp(\ket{\phi})$. Form a submatrix $V_{S^c}$ of $V$ by deleting all rows corresponding to indices in $S$. Evidently, $V_{S^c}w = 0$, but $Vw \neq 0$. Since any vector in the kernel of $V$ must also be in the kernel of $V_{S^c}$, it follows that $\ker V_{S^c}\supset \ker V$; hence $\rank V_{S^c} < \rank V$ by the rank-nullity theorem. 

    Thus, if a satisfactory $s$-sparse $\ket{\phi}$ exists, then there exists some subset $S$ of indices such that $\rank V_{S^c} < \rank V$. The converse statement also holds: if the rank condition is satisfied for some $S$, then there is some $w$ in $\ker V_{S^c}$ but not in $\ker V$, and $\ket{\phi}=Vw$ must be $s$-sparse. Consequently, the rank condition is necessary and sufficient for the existence of the desired sparse state. The subroutine then simply checks whether the rank condition holds for any of the possible size-$s$ sets of indices.
    
    \begin{subroutine}[H]
    \caption{Determine whether $s$-sparse vector is contained in span of given vectors}\label{subroutine:sparse}
        \begin{algorithmic}[1]
            \Require Set of vectors $\ket{\phi_1},\ldots,\ket{\phi_M}\in\mathbb{C}^N$ and positive integer $s$ (desired sparsity).
            \Ensure Decide if there exists any vector $\ket{\phi}$ in the span of $\ket{\phi_1},\ldots,\ket{\phi_M}\in\mathbb{C}^N$ whose support in the computational basis has size $\leq s$. If true, return any such $\ket{\phi}$. Runs in time $\mathcal{O}(N^{s+1}M\min\{N,M\})$.
            \State Form the matrix $V\in\mathbb{C}^{N\times M}$ whose columns are $\ket{\phi_1},\ldots \ket{\phi_M}$.
            \State Compute $r=\rank V$.
            \For{each subset $S\subseteq\{1,\ldots,N\}$ with $\abs{S} = s$}
                \State Form the submatrix $V_{S^c}$ of $V$ containing only the rows outside $S$.
                \If{$\rank{V_{S^c}}< r$}
                    \State Find $w$ in $\ker V_{S^c}$ that is not in $\ker V$.
                    \State Return True with solution $\ket{\phi}= Vw$.
                \EndIf
            \EndFor
            \State Otherwise, return False.
        \end{algorithmic}
    \end{subroutine}

    We now prove the correctness of Algorithm~\ref{algorithm:eigenwalk}, beginning with the extremal case. It suffices to show that any lowest eigenvector of $\Pi_BH\Pi_B$ restricted to $\operatorname{Span}\{\ket{j}:j\in B\}$ is also a lowest eigenvector of $H$ (the case of a highest eigenvector follows identically). First note that Theorem~\ref{thm:localized} implies that there exists a normalized lowest eigenvector $\ket{\psi_v}$ of $H$ whose support is contained within the ball $B$ of radius $s-1$ centered at $v$. Since $\supp(\ket{\psi_v})\subseteq B$, we have $\ket{\psi_v}\in\operatorname{Span}\{\ket{j}:j\in B\}$ and $\Pi_B\ket{\psi_v}=\ket{\psi_v}$. Therefore,
    \begin{align}
        \bra{\psi_v}\Pi_B H\Pi_B\ket{\psi_v}=\bra{\psi_v}H\ket{\psi_v}.
    \end{align}
    Moreover, for every normalized $\ket{\phi}\in\operatorname{Span}\{\ket{j}:j\in B\}$, we have $\Pi_B\ket{\phi}=\ket{\phi}$ and 
    \begin{align}
        \bra{\phi}\Pi_BH\Pi_B\ket{\phi} = \bra{\phi}H\ket{\phi}\geq \lambda,
    \end{align}
    where $\lambda$ is the lowest eigenvalue of $H$. Since equality is achieved by $\ket{\psi_v}$, the lowest eigenvalue of $\Pi_BH\Pi_B$ restricted to $\operatorname{Span}\{\ket{j}:j\in B\}$ is also $\lambda$. Therefore, for any lowest eigenvector $\ket{\psi'}$ of this restricted operator, $\bra{\psi'}H\ket{\psi'} = \lambda$. By the Rayleigh--Ritz theorem (see Appendix \ref{appendix:rayleigh}), any such $\ket{\psi'}$ is a lowest eigenvector of $H$ as well, as desired.

    In the non-extremal case, it is sufficient to show that an eigenvector $\ket{\phi}$ of $\Pi_B H \Pi_B$ restricted to $\operatorname{Span}\{\ket{j}:j\in B\}$ is an eigenvector of $H$ if and only if the energy variance of $\ket{\phi}$ with respect to $H$ is zero. The forward direction is clear. To prove the reverse direction, assume that a normalized eigenvector $\ket{\phi}$ of the restricted $\Pi_B H \Pi_B$ satisfies $0=\operatorname{Var}(\ket{\phi}) = \bra{\phi}H^2\ket{\phi}-\bra{\phi}H\ket{\phi}^2$. Let $\mu =\bra{\phi}H\ket{\phi}$. Then, since $H=H^\dagger$, we have ${0 = \bra{\phi}(H-\mu)^2\ket{\phi} = \norm{(H-\mu)\ket{\phi}}}$. It follows that $(H-\mu)\ket{\phi} = 0$, so $H\ket{\phi} = \mu\ket{\phi}$ and $\ket{\phi}$ is an eigenvector of $H$, as desired.

    We now prove that Algorithm~\ref{algorithm:eigenwalk} runs in the stated time. We begin analyzing the runtime of the breadth-first search. The search runs in time $\mathcal{O}\left(\abs{B}+\abs{E(B)}\right)$, where $E(B)$ is the set of edges with at least one vertex in $B$. If $d<2$, then $\abs{B} = \mathcal{O}(1)$; otherwise, 
    \begin{align}
        \abs{B} \leq 1+d + d(d-1) +\ldots+ d(d-1)^{r-1} = \mathcal{O}(d^{r}),
    \end{align}
    since a ball of radius $r$ centered at $v$ has 1 vertex at distance 0 from the center, at most $d$ vertices at distance 1, and at most $d-1$ vertices at distances $2,\ldots,r$. Furthermore, note that $E(B)\leq d\abs{B}/2$, since there are at most $d$ edges per vertex, with a factor of $1/2$ to prevent double-counting. Thus, the breadth-first search runs in time $\mathcal{O}(d^{r + 1})$.

    Next we determine the runtime of computing the projection $\Pi_B H \Pi_B$. Note that since $H$ is $G$-consistent and $G$ has maximum degree $d$, $H$ has at most $d+1$ nonzero entries per row and column. We then compute the entries of the projection as follows. For every $j\in B$, query the at most $d+1$ nonzero entries in row $j$ of $H$, and keep only those entries whose column index also lies in $B$. With worst-case $\mathcal{O}(|B|)$ lookup for membership in $B$, computing the entries of the projection takes $\mathcal{O}(d\abs{B}^2) = \mathcal{O}(d^{2r+1})$ time. 

    Exact diagonalization is the asymptotic bottleneck in the extremal case. Since the projection is a $\abs{B}\times\abs{B}$ matrix, constructing it densely from a sparse representation takes time $\mathcal{O}(\abs{B}^2) = \mathcal{O}(d^{2r})$. Then, exact diagonalization of the dense projection takes time $\mathcal{O}(\abs{B}^3) = \mathcal{O}(d^{3r})$. In the extremal case, subsequently identifying the solution is easy: the algorithm simply returns the support of one of the obtained extremal eigenvectors, so the total runtime is $\mathcal{O}(d^{3r})=\mathcal{O}(d^{3s-3})$.

    However, in the non-extremal setting, identifying the solution turns out to be more expensive than the diagonalization step, whose runtime is $\mathcal{O}(d^{3r}) = \mathcal{O}(d^{6s-6})$. We now analyze the cost of computing the energy variance for every eigenvector of the projection. Each eigenvector $\ket{\phi}$ is $\abs{B}$-sparse and the full Hamiltonian $H$ has at most $d$ nonzero entries per column, hence $H\ket{\phi}$ is $(d\abs{B})$-sparse. If the entries of $H\ket{\phi}$ are naively stored in an unordered list, then the sparse matrix-vector multiplication requires $\mathcal{O}(d^2\abs{B}^2)$ time in the worst case. Afterwards, computing $\bra{\phi}H^2\ket{\phi} = \norm{H\ket{\phi}}^2$ and $\bra{\phi}H\ket{\phi}$ take time $d\abs{B}$ and $d\abs{B}^2$, respectively. Hence, computing the variance for all $\mathcal{O}(\abs{B})$ eigenvectors of the projection takes time $\mathcal{O}(d^2\abs{B}^2\cdot \abs{B})=\mathcal{O}(d^2\abs{B}^3) = \mathcal{O}(d^{6s-4})$. Note that the use of more sophisticated data structures can decrease the expected runtime of this step significantly. 

    Nonetheless, the runtime of Subroutine~\ref{subroutine:sparse} dominates the variance calculation step, even when the latter is implemented naively. There are $\binom{N}{s}$ choices of the subset $S$, and computing the rank of each $V_{S^c}$ takes $\mathcal{O}(NM\min\{N,M\})$ time via Gaussian elimination (the single computation of $\rank V$ in the beginning also takes the same time). Hence, identifying whether there exists a subset $S$ which supports the desired $s$-sparse vector takes time $\mathcal{O}\left(\binom{N}{s}NM\min\{N,M\}\right)=\mathcal{O}\left(N^sNM\min\{N,M\}\right)=\mathcal{O}\left(N^{s+1}M\min\{N,M\}\right)$. Once such an $S$ has been identified, it is possible to find a $w$ which is in $\ker V_{S^c}\setminus \ker V$ by performing Gaussian elimination on $V_{S^c}$ in time $\mathcal{O}(NM\min\{N,M\})$. This computation occurs only once, so the runtime for the full subroutine is $\mathcal{O}\left(N^{s+1}M\min\{N,M\}\right)+\mathcal{O}(NM\min\{N,M\}) = \mathcal{O}\left(N^{s+1}M\min\{N,M\}\right)$.

    Lastly, when this subroutine is deployed in Algorithm~\ref{algorithm:eigenwalk}, $N = \abs{B}$ and the subroutine is called once per distinct eigenvalue $\mu$ of $\Pi_B H \Pi_B$, with $M = \abs{R_\mu}$ eigenvectors of the full Hamiltonian belonging to that eigenspace. Since every $R_\mu\leq \abs{B}$ and ${\sum_\mu \abs{R_\mu} \leq \abs{B}}$, the total cost across all calls is $\sum_\mu \mathcal{O}(\abs{B}^{s+1}\abs{R_\mu}^2)= \mathcal{O}(\abs{B}^{s+1}\sum_\mu\abs{R_\mu}^2)=\mathcal{O}(\abs{B}^{s+1}(\sum_\mu\abs{R_\mu})^2)= \mathcal{O}(\abs{B}^{s+3}) = \mathcal{O}(d^{r(s+3)}) = \mathcal{O}(d^{(2s-2)(s+3)})$. The cost of this step subsumes all previous steps for $s\geq 2$, so the total runtime for Algorithm~\ref{algorithm:eigenwalk} in the non-extremal case is $\mathcal{O}(d^{(2s-2)(s+3)})$.

\end{proof}

\subsection{Rayleigh--Ritz Theorem}\label{appendix:rayleigh}
Below we provide a self-contained statement of the Rayleigh--Ritz Theorem \cite{horn2012matrix} used in the proof of Theorems~\ref{thm:degeneracy}, \ref{thm:alg}, and \ref{thm:spca}.
\begin{theorem*}[Rayleigh--Ritz]
    Given a Hermitian operator $H$ on an $N$-dimensional Hilbert space, let $\lambda_1$ be the lowest (or highest) eigenvalue. If a normalized vector $\ket{\psi}$ satisfies 
    \begin{align}
        \bra{\psi}H\ket{\psi}=\lambda_1,
    \end{align}
    then $\ket{\psi}$ is an eigenvector of $H$ with eigenvalue $\lambda_1$.
\end{theorem*}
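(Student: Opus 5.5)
I will prove the Rayleigh--Ritz statement via the spectral theorem, treating the lowest-eigenvalue case in detail. The highest case then follows by applying the same argument to $-H$, whose lowest eigenvalue is $-\lambda_{\max}$ with the same eigenvectors. Since $H$ is Hermitian on an $N$-dimensional space, fix an orthonormal eigenbasis $\{\ket{e_i}\}_{i=1}^N$ with $H\ket{e_i}=\mu_i\ket{e_i}$. Here the $\mu_i$ are the eigenvalues listed with multiplicity, and $\mu_i\geq\lambda_1$ for all $i$ because $\lambda_1$ is the lowest eigenvalue.

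Expand the normalized vector as $\ket{\psi}=\sum_i c_i\ket{e_i}$ with $\sum_i\abs{c_i}^2=1$. This gives $\bra{\psi}H\ket{\psi}=\sum_i \mu_i\abs{c_i}^2$, and therefore
\begin{align}
    \bra{\psi}(H-\lambda_1)\ket{\psi}=\sum_i(\mu_i-\lambda_1)\abs{c_i}^2 .
\end{align}
By hypothesis the left-hand side is zero. Every summand on the right is a product of nonnegative factors, so each one must vanish. Hence $c_i=0$ whenever $\mu_i\neq\lambda_1$, i.e.\ whenever $\mu_i>\lambda_1$. It follows that $\ket{\psi}$ lies in the span of the eigenvectors with eigenvalue $\lambda_1$, and so $H\ket{\psi}=\sum_i\mu_i c_i\ket{e_i}=\lambda_1\sum_i c_i\ket{e_i}=\lambda_1\ket{\psi}$, which is the claim.

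An equivalent route avoids coordinates. Because $H-\lambda_1$ is positive semidefinite, it has a Hermitian square root $A$ with $A^2=H-\lambda_1$. The hypothesis then gives $\norm{A\ket{\psi}}^2=\bra{\psi}(H-\lambda_1)\ket{\psi}=0$, so $A\ket{\psi}=0$ and hence $(H-\lambda_1)\ket{\psi}=A(A\ket{\psi})=0$.

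No step here is a real obstacle; everything reduces to the spectral theorem. The only point that needs care is using nonnegativity in the right place: a sum of nonnegative terms equal to zero forces each term to vanish. This is precisely where extremality of $\lambda_1$ enters, and it is the same reason the argument fails for interior eigenvalues, where the $\mu_i-\lambda$ can take both signs and cancel.
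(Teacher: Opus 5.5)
Your proposal is correct and follows essentially the same route as the paper: expand $\ket{\psi}$ in an orthonormal eigenbasis of $H$, and use the fact that $\sum_i(\mu_i-\lambda_1)\abs{c_i}^2=0$ with nonnegative terms forces every coefficient outside the $\lambda_1$-eigenspace to vanish. Your explicit reduction of the highest case to $-H$ and the alternative square-root argument are both valid additions to what the paper writes.
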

\begin{proof}
    Assume without loss of generality that the eigenvalues $\lambda_j$ of $H$ are ordered such that $\lambda_1\leq\lambda_2\leq\ldots\leq \lambda_N$. Since $H$ is Hermitian, it has an orthonormal basis of eigenvectors $\ket{\psi_j}$ corresponding to the eigenvalues $\lambda_j$ for $j=1,\ldots, N$. Write $\ket{\psi} = \sum_jc_j\ket{\psi_j}$ for some coefficients $c_j$ such that $\sum_j\abs{c_j}^2 = 1$. Then $\bra{\psi}H\ket{\psi} = \sum_j\abs{c_j}^2\lambda_j$. Since every $\lambda_j\geq \lambda_1$,
    \begin{align}
        \bra{\psi}H\ket{\psi} = \sum_j\abs{c_j}^2\lambda_j \geq \sum_j\abs{c_j}^2 \lambda_1 =\lambda_1
    \end{align}
    with equality if and only if $c_j = 0$ for all $j$ such that $\lambda_j > \lambda_1$. Hence, $\ket{\psi}$ must only have support on the eigenspace of $H$ corresponding to eigenvalue $\lambda_1$, from which it follows that $H\ket{\psi}= \lambda_1\ket{\psi}$.
\end{proof}

\section{Additional eigenwalk problem examples}\label{appendix:examples}
In this appendix, we provide additional graph examples that complement results in the main text.
Section~\ref{appendix:examples12} constructs explicit examples showing that the sufficient conditions for degeneracy in Theorems~\ref{thm:degeneracy} and~\ref{thm:eigstruct} are tight, and demonstrates that these theorems admit no converse.
Section~\ref{appendix:examples3} constructs path-graph examples that saturate both the extremal and non-extremal bounds in Theorem~\ref{thm:localized}, confirming that these bounds cannot be improved in general.

\subsection{Examples for Theorems \ref{thm:degeneracy} and \ref{thm:eigstruct}} \label{appendix:examples12}
\begin{figure}[h]
    \centering
    \includegraphics[width = 0.55\linewidth]{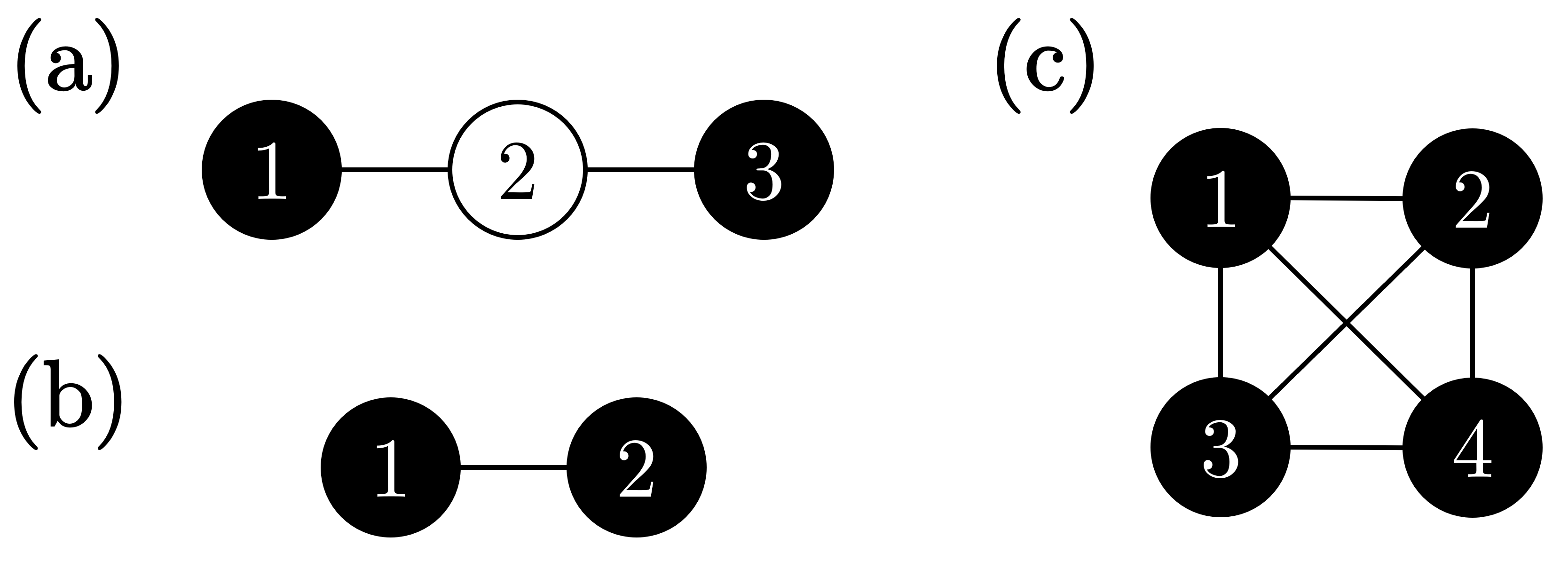}
    \caption{More graph examples, each with a subset $S$ of shaded vertices. For the graph in \textbf{(a)}, there exists a graph-consistent Hermitian matrix $H$ admitting a nondegenerate non-extremal eigenvector whose support is $S$. For the graph in \textbf{(b)}, there exists a graph-consistent Hermitian matrix $H$ admitting a nondegenerate extremal eigenvector whose support is $S$. The graph in \textbf{(c)} admits a graph-consistent Hermitian matrix $H$ with a degenerate eigenvector whose support cannot be partitioned into the supports of eigenvectors with the same eigenvalue.} 
    \label{fig:more-examples}
\end{figure}

We first provide some explicit examples to confirm that the sufficient conditions of Theorem \ref{thm:degeneracy} are tight. Namely, there exist nondegenerate non-extremal (resp. extremal) eigenvectors whose support can be partitioned into $S,S'$ separated by a distance of 2 (resp. 1) in the underlying graph. Consider the three-vertex path graph depicted in Figure~\ref{fig:more-examples}a and its adjacency matrix
\begin{align} \label{eq:3-v-path}
    H = \begin{pmatrix}
        0 & 1 & 0\\
        1 & 0 & 1\\
        0 & 1 & 0
    \end{pmatrix},
\end{align}
which is consistent with the graph. $H$ has spectrum $\{-\sqrt{2},0,+\sqrt{2}\}$, and the nondegenerate (non-extremal) eigenvector associated with eigenvalue $0$ is $\ket{1} - \ket{3}$. The vertices $\{1,3\}$ in the support are separated by a distance of 2, which implies that the non-extremal bound in Theorem~\ref{thm:degeneracy} is tight.

Similarly, consider the connected two-vertex graph depicted in Figure~\ref{fig:more-examples}b and its adjacency matrix
\begin{align}
    H = \begin{pmatrix}
        0 & 1\\
        1 & 0
    \end{pmatrix},
\end{align}
which is also graph-consistent. $H$ has spectrum $\{-1,+1\}$, and the nondegenerate (extremal) eigenvector associated with eigenvalue $-1$ is $\ket{1}-\ket{2}$. The vertices $\{1,2\}$ in the support are separated by a distance of $1$, which implies that the extremal bound in Theorem~\ref{thm:degeneracy} is tight as well.

Next, we demonstrate that Theorem~\ref{thm:degeneracy} has no converse, that is, given two degenerate eigenvectors whose supports are disjoint sets, these sets need not be separated by a distance $>1$ in the graph. We now present an example where these sets are separated by a distance of exactly 1. Consider the four-vertex complete graph in Figure~\ref{fig:more-examples}c, and define $H(c)$ to be the graph-consistent family of matrices
\begin{align}
    H(c) = \begin{pmatrix}
        0 & 1 & c & c\\
        1 & 0 & c & c\\
        c & c & 0 & 1\\
        c & c & 1 & 0 
    \end{pmatrix},\qquad
    c \neq 0.
\end{align}
The spectrum of $H(c)$ is $\{-1, -1, 1+2c, 1-2c\}$, and $\ket{1}-\ket{2}$ and $\ket{3}-\ket{4}$ are degenerate eigenvectors with eigenvalue $-1$. Moreover, the vertex sets $\{1,2\}$ and $\{3,4\}$ are separated by a distance of 1, as desired. Note that these $-1$ eigenvectors can be made either extremal or non-extremal depending on the choice of $c$: for instance, if $c = 1/2$, they are ground states, and if $c = 2$, they are non-extremal.

We also provide an explicit example of the virtual vertex phenomenon discussed after the statement of Theorem~\ref{thm:eigstruct}. Recall the graph of Figure~\ref{fig:more-examples}a and its adjacency matrix $H$ in Eq.~(\ref{eq:3-v-path}), which has nondegenerate non-extremal eigenvector $\ket{1}-\ket{3}$. The support vertices 1 and 3 are connected only via a path that crosses a vertex outside the support (vertex 2). At the same time, the eigenvector cannot be decomposed into sparser eigenvectors with the same eigenvalue, since it is nondegenerate. 

\subsection{Examples for Theorem \ref{thm:localized}}\label{appendix:examples3}
We now construct examples which saturate the bound in Theorem~\ref{thm:localized}. Let $G$ be the path graph shown in both parts of Figure~\ref{fig:paths}, and suppose $G$ contains some odd number $m$ of vertices. Now define $H$ to be the adjacency matrix of $G$. It is known \cite{behn2011} (and straightforward to verify) that the eigenvalues and corresponding eigenvectors of $H$ are 
\begin{align}
    \lambda_k = 2\cos(\frac{k\pi}{m+1}),\quad  \ket{\psi_k} = \sum_{j=1}^m \sin(\frac{jk\pi}{m+1})\ket{j}
\end{align}
for $k=1,\ldots,m$. Evidently, all of the eigenvectors are nondegenerate. 

Consider the top eigenvector $\ket{\psi_1}$. Since 
$0<j\pi/(m+1) < \pi$ for $j=1,\ldots, m$, $\ket{\psi_1}$ is supported on all the vertices in $G$. Since $G$ is a path, the diameter of $\supp(\ket{\psi_1})$ is $s-1$, where $s = \abs{\supp(\ket{\psi_1})}$. Furthermore, since $\ket{\psi_1}$ is nondegenerate, it is the only top
eigenvector up to scalar multiplication. Therefore, for any
$v\in\supp(\ket{\psi_1})$, every top eigenvector whose support contains $v$
has support diameter $s-1$. This example thus saturates the extremal bound in Theorem~\ref{thm:localized}, showing that the bound is tight.

Next consider the non-extremal eigenvector $\ket{\psi_\mu}$, where $\mu = (m+1)/2$ (recall that $m$ was chosen to be odd, so $\mu$ is an integer). It is readily checked that the coefficients of $\ket{\psi_\mu}$ are zero on even basis states and nonzero on odd states. Hence, $s=\abs{\supp(\ket{\psi_\mu})}$ is equal to the number of odd states, which is $\mu$. The endpoints of $G$ are contained within $\supp(\ket{\psi_\mu})$ and are separated by a distance of $m-1 = 2\mu-2 = 2s-2$. Thus, $\supp(\ket{\psi_\mu})$ has diameter $2s-2$. Furthermore, since $\ket{\psi_\mu}$ is nondegenerate, it is the only eigenvector with eigenvalue $\lambda_\mu$, up to scalar multiplication. Therefore, for any
$v\in\supp(\ket{\psi_\mu})$, every eigenvector with eigenvalue $\lambda_\mu$ whose support contains $v$
has support diameter $2s-2$. This example thus saturates the non-extremal bound in Theorem~\ref{thm:localized}, showing that this bound is also tight.

\section{Application to sparse principal component analysis}\label{appendix:spca}
Given a covariance matrix $\Sigma$, sparse principal component analysis (SPCA) seeks an $s$-sparse vector $\ket{\psi}$ that maximizes the variance $\bra{\psi}\Sigma \ket{\psi}$, where $s$ is a given positive integer \cite{spca}. Analogously to solving for exact eigenvectors, the task of computing such a $\ket{\psi}$ can be reduced to finding its support, which resembles the eigenwalk problem. However, since $\ket{\psi}$ need not be an exact eigenvector of $\Sigma$, it is not obvious that the support-localization result of Theorem~\ref{thm:localized} must apply; hence, it is not clear \textit{a priori} that that the support of $\ket{\psi}$ can be found efficiently by an eigenwalk. 

The following theorem generalizes Theorem~\ref{thm:localized} to the setting of SPCA, thereby proving that the supports of sparse principal components are indeed tightly localized in the graph determined by the nonzero entries of the covariance matrix. 
\setcounter{theorem}{5}
\begin{theorem}[Support localization for sparse principal components]\label{thm:spca}
    Let $G$ be a simple graph and $\Sigma$ be a $G$-consistent Hermitian matrix. Given a positive integer $s$, suppose the normalized state $\ket{\psi}$ belongs to the solution set $\mathcal{P}$ of the following SPCA problem:
    \begin{align}\label{eq:spca}
        \mathcal{P} = \argmax_{\norm{\phi}=1,\  \abs{\supp(\ket{\phi})} \leq s} \bra{\phi}\Sigma\ket{\phi}.
    \end{align}
    Then, given any vertex $v\in\supp(\ket{\psi})$, there exists a state $\ket{\psi_v}$ with $v\in\supp(\ket{\psi_v})$ such that $\ket{\psi_v} \in\mathcal{P}$ and 
    \begin{align}
        \operatorname{diam}_G\bigl(\supp(\ket{\psi_v})\bigr) \leq s-1.
    \end{align}
    
\end{theorem}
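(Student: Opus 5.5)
The plan is to mimic the extremal case of Theorem~\ref{thm:localized}, with the Rayleigh--Ritz step replaced by optimality within $\mathcal{P}$. Let $R=\supp(\ket{\psi})$ and $\lambda^\ast=\bra{\psi}\Sigma\ket{\psi}$, the optimal value. Let $G'$ be the subgraph of $G$ induced on $R$, and let $S$ be the vertex set of the connected component of $G'$ containing $v$. Set $S'=R\setminus S$. If $S'=\emptyset$, take $\ket{\psi_v}=\ket{\psi}$. The diameter bound then follows exactly as in the extremal case of Theorem~\ref{thm:localized}: any two vertices of $S$ are joined by a path inside $G'_1$ with non-repeated vertices of $S$, so the path has length at most $|S|-1\leq s-1$.

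Now suppose $S'\neq\emptyset$. Write $\ket{\psi}=\alpha\ket{\psi_S}+\beta\ket{\psi_{S'}}$ with normalized components, $|\alpha|^2+|\beta|^2=1$ and $\alpha,\beta\neq 0$. Since no vertex of $S$ is adjacent to a vertex of $S'$, the cross terms vanish, giving
\begin{align}
\lambda^\ast=|\alpha|^2\bra{\psi_S}\Sigma\ket{\psi_S}+|\beta|^2\bra{\psi_{S'}}\Sigma\ket{\psi_{S'}}.\nonumber
\end{align}
Both $\ket{\psi_S}$ and $\ket{\psi_{S'}}$ are normalized and at most $s$-sparse, so each is feasible for Eq.~(\ref{eq:spca}). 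Each therefore has value at most $\lambda^\ast$. A convex combination with positive weights equal to $\lambda^\ast$ forces both values to equal $\lambda^\ast$. Hence $\ket{\psi_S}\in\mathcal{P}$. Its support is exactly $S$, which contains $v$, and it obeys the diameter bound $|S|-1\leq s-1$. Set $\ket{\psi_v}=\ket{\psi_S}$.

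This argument is a direct averaging, so I do not expect a genuine obstacle. The one point to check carefully is the vanishing of the cross terms $\bra{\psi_S}\Sigma\ket{\psi_{S'}}$. This holds because $S$ and $S'$ are disjoint, so only off-diagonal entries connect them. Those entries vanish by $G$-consistency because there are no edges between different components of the induced subgraph. Feasibility of the pieces is immediate, since their supports are subsets of $R$.
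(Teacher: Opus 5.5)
Your proof is correct, and it reaches the conclusion by a more direct route than the paper does. The paper's proof goes through eigenvectors:
\begin{itemize}
\item it first argues that $\ket{\psi}$ must maximize $\bra{\phi}\Pi_R\Sigma\Pi_R\ket{\phi}$ over unit vectors in the span of $R$, so by Rayleigh--Ritz it is a highest eigenvector of the compression $\Pi_R\Sigma\Pi_R$;
\item it then regards this compression as a $G[R]$-consistent matrix and applies the extremal case of Theorem~\ref{thm:localized}, obtaining a highest eigenvector $\ket{\psi_v}$ of the compression with $\operatorname{diam}_{G[R]}\leq s-1$;
\item it transfers this bound to $G$, and concludes $\ket{\psi_v}\in\mathcal{P}$ because $\ket{\psi_v}$ attains the same objective value.
\end{itemize}

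You never pass to eigenvectors. After splitting $\ket{\psi}$ along the component of $v$ in $G[R]$ and noting that the cross term vanishes by $G$-consistency, you use that the two renormalized pieces are themselves feasible for the SPCA problem. A strictly positive convex combination of two values $\leq\lambda^\ast$ that equals $\lambda^\ast$ then forces both pieces to be optimal.

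What your route buys:
\begin{itemize}
\item It is self-contained: it needs no Rayleigh--Ritz and no appeal to Theorems~\ref{thm:degeneracy} or~\ref{thm:localized}.
\item It isolates the one structural property being used, namely that the feasible set is closed under restricting and renormalizing supports.
\item The averaging step replaces the reflection trick $\ket{\psi_S}-\ket{\psi_{S'}}$ that the paper uses in the eigenvector setting. Combined with Rayleigh--Ritz, the same averaging would also prove the extremal case of Theorem~\ref{thm:degeneracy}.
\end{itemize}

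What the paper's route buys: it is modular, and it makes explicit that an optimal sparse principal component is an extremal eigenvector of its own compression. That is the conceptual bridge the paper uses to treat SPCA as an eigenwalk.

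The two arguments produce the same witness. The state the paper obtains through the proof of Theorem~\ref{thm:localized} is exactly your normalized projection of $\ket{\psi}$ onto $S$.
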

\begin{proof}
    Let $R = \supp(\ket{\psi})$. Since $\Pi_R \ket{\psi} = \ket{\psi}$, we have $\bra{\psi}\Pi_R\Sigma\Pi_R\ket{\psi} = \bra{\psi}\Sigma\ket{\psi}$. It must be true that $\ket{\psi}$ maximizes $\bra{\phi}\Pi_R\Sigma\Pi_R\ket{\phi}$ over all normalized states $\ket{\phi}$ in the span of $R$, otherwise $\ket{\psi}$ would not belong to $\mathcal{P}$. It follows from the Rayleigh--Ritz Theorem (see Appendix \ref{appendix:rayleigh}) that $\ket{\psi}$ is a highest eigenvector of $\Pi_R\Sigma\Pi_R$ restricted to the span of $R$.
    
    Now regard $\Pi_R\Sigma\Pi_R$ restricted to the span of $R$ as a Hermitian matrix consistent with the induced subgraph $G[R]$, that is, the graph with vertex set $R$ and whose edge set contains all edges of $G$ with both endpoints in $R$. By Theorem~\ref{thm:localized}, for any $v\in\supp(\ket{\psi})$, there exists a normalized highest eigenvector $\ket{\psi_v}$ of the restricted $\Pi_R\Sigma\Pi_R$ such that $v\in\supp(\ket{\psi_v})$ and
    \begin{align}
        \operatorname{diam}_{G[R]}\bigl(\supp(\ket{\psi_v})\bigr)
        \leq \abs{\supp(\ket{\psi})}-1\leq s-1.
    \end{align}
    Since distances in $G$ are no larger than distances in the induced subgraph $G[R]$, it follows that
    \begin{align}
        \operatorname{diam}_{G}\bigl(\supp(\ket{\psi_v})\bigr)\leq s-1.
    \end{align}
    Moreover, since $\supp(\ket{\psi_v})\subseteq R$, we have
    \begin{align}
        \bra{\psi_v}\Sigma\ket{\psi_v}
        =
        \bra{\psi_v}\Pi_R\Sigma\Pi_R\ket{\psi_v}
        =
        \bra{\psi}\Pi_R\Sigma\Pi_R\ket{\psi}
        =
        \bra{\psi}\Sigma\ket{\psi}.
    \end{align}
    Hence, $\ket{\psi_v}\in\mathcal{P}$, as desired.
\end{proof}

Theorem~\ref{thm:spca} implies that SPCA might be accelerated by an eigenwalk when a guiding vertex, promised to belong to the support of an $s$-sparse principal component, is known. In this case, a graph search can be used to collect all vertices within radius $s-1$ of the guiding vertex. The covariance matrix can then be projected to the span of the collected vertices, and the same $s$-sparse PCA problem can be solved on this projection. By Theorem~\ref{thm:spca}, there exists an optimizer of the original SPCA problem whose support is contained in the ball of radius $s-1$ around the guiding vertex. Therefore, the SPCA problem restricted to this ball has the same optimum value as the original problem. Since every feasible vector for the restricted problem is also feasible for the original problem, any optimizer of the restricted problem is also an optimizer of the original problem. Ideally, the projected covariance matrix is much smaller than the full covariance matrix, making the subsequent SPCA instance more efficient. However, this dimensional reduction is only useful when the collected ball is a proper subset of the full vertex set. Thus the covariance matrix and its underlying graph must have nontrivial sparsity structure; for example, if the covariance matrix is fully dense, then the graph is a complete graph and every radius-one ball already contains all vertices.

\newpage
\bibliographystyle{quantum}
\bibliography{References}
\end{document}